\newcommand{\Set}[1]{\left\{\, #1 \,\right\}}
\newcommand{\bigO}[1]{\mathcal{O}(#1)}
\newcommand{\tildeO}[1]{\tilde{\mathcal{O}}(#1)}
\newcommand{\Span}[1]{\mathrm{span}(#1)}
\newcommand{\cost}[1]{\mathrm{cost}(#1)}
\newcommand{\order}[0]{\sigma}
\newcommand{\vl}[0]{\order_{l}}
\newcommand{\vr}[0]{\order_{r}}
\newcommand{\vi}[0]{\order_{i}}
\newcommand{\vj}[0]{\order_{j}}
\newcommand{\vv}[1]{\order_{#1}}
\newcommand{\ceil}[1]{\left\lceil #1 \right\rceil}
\newcommand{\opt}[1]{{#1}^{*}}
\newcommand{\reg}[1]{\bar{#1}}
\newcommand{\regular}[0]{\reg{\order}}
\newcommand{\rich}[0]{W_{I}}
\newcommand{\richin}[0]{W^{-}_{I}}
\newcommand{\richout}[0]{W^{+}_{I}}
\newcommand{\optorder}[0]{\opt{\order}}
\newcommand{\optI}[0]{I^{*}}
\newcommand{\optIL}[0]{L^{*}}
\newcommand{\optIR}[0]{R^{*}}
\newcommand{\IL}[0]{L}
\newcommand{\IR}[0]{R}
\newcommand{\appr}[0]{\alpha}
\newcommand{\maxlen}[0]{\beta}
\newcommand{\SFAST}[0]{\textsc{Subset-FAST}}
\newcommand{\SFAS}[0]{\textsc{Subset-FAS}}
\newcommand{\FAST}[0]{\textsc{FAST}}
\newcommand{\ins}[0]{\mathcal{I}}
\newcommand{\instance}[0]{\mathcal{I} = (D, T, k)}
\newcommand{\defproblem}[3]{
    \begin{center}
        \framebox{
            \parbox{0.92\textwidth}{
                #1 \\
                \textbf{Input}: #2 \\
                \textbf{Output}: #3
            }
        }
    \end{center}
}
\newtheorem{reduction}{Reduction Rule}
\newtheorem{corllary}{Corllary}
\begin{document}

\title{An Almost Quadratic Vertex Kernel for Subset Feedback Arc Set in Tournaments}

\titlerunning{Kernelizations for SFAST}

\author{anonymous}
\institute{anonymous}
\author{Tian Bai}

\authorrunning{T. Bai}

 \institute{
 The University of Hong Kong, Hong Kong, China \\
 University of Electronic Science and Technology of China, Chengdu, China \\
 \email{tianbai@hku.hk}
 }

\maketitle              

\begin{abstract}
    In the \textsc{Feedback Arc Set in Tournaments} (\FAST{}) problem, we are given a tournament $D$ and a positive integer $k$, and the objective is to determine whether there exists an arc set $S \subseteq A(D)$ of size at most $k$ whose removal makes the graph acyclic.
    This problem is well-known to be equivalent to a natural tournament ranking problem, whose task is to rank players in a tournament such that the number of pairs in which the lower-ranked player defeats the higher-ranked player is no more than $k$.
    Using the PTAS for \FAST{} [STOC 2007], Bessy et al.\ [JCSS 2011] present a $(2 + \varepsilon)k$-vertex kernel for this problem, given any fixed $\varepsilon > 0$.
    A generalization of \FAST{}, called \SFAST{}, further includes an additional terminal subset $T \subseteq V(D)$ in the input.
    The goal of \SFAST{} is to determine whether there is an arc set $S \subseteq A(D)$ of size at most $k$ whose removal ensures that no directed cycle passes through any terminal in $T$.
    Prior to our work, no polynomial kernel for \SFAST{} was known.
    In our work, we show that \SFAST{} admits an $\bigO{(\appr k)^{2}}$-vertex kernel, provided that \SFAST{} has an approximation algorithm with an approximation ratio $\appr$.
    Consequently, based on the known $\bigO{\log k \log \log k}$-approximation algorithm, we obtain an almost quadratic kernel for \SFAST{}.

    \keywords{Subset Feedback Arc Set \and Tournaments \and Kernelizations \and Parameterized Algorithms \and Graph Algorithms.}
\end{abstract}

\section{Introduction}

A tournament is a specialized form of a directed graph where every pair of vertices is connected by exactly one of the arcs, with two possible directions.
Given a tournament $D$ on $n$ vertices and a positive integer parameter $k$, the \textsc{Feedback Arc Set in Tornament} problem (\FAST{}) seeks to determine whether there exists a set of $k$ arcs whose removal results in an acyclic directed graph.

\FAST{} have several applications, including voting systems~\cite{4orCharonH07}, machine learning~\cite{jairCohenSS99}, search engine ranking~\cite{wwwDworkKNS01}, and rank aggregation~\cite{jacmAilonCN08,nipsVeerathuR21}.
Consider organizing a chess tournament where everyone plays against every other player, and the goal is to rank the players based on the results.
However, it is uncommon that the results are acyclic, making it impossible to rank the players using a straightforward topological ordering.
A natural objective is to find a ranking that minimizes the number of upsets, where an upset is defined as a pair of players in which the lower-ranked player defeats the higher-ranked player.
This problem is well-known to be equivalent to the \FAST{} problem~\cite{stocKenyon-MathieuS07,icalpAlonLS09}.

In some scenarios, it is necessary to rank only local players to determine advancements or similar purposes.
However, ranking based solely on the results among local players may overlook the advantages some local players have against global players.
Inspired by this observation, we investigate the problem of local ranking within a global context.
Given a set of local players, the task is to find a ranking that minimizes the number of upsets involving at least one local player -- either as one of the two players in the upset or as a player ranked between the two players in the upset.
This ranking problem is equivalent to a generalization of \FAST{}, known as \SFAST{}.
In \SFAST{}, we are additionally given a set of terminal vertices $T \subseteq V(D)$, and the objective is to find a set of arcs of size at most $k$ whose removal ensures that no directed cycle passes through any terminal in $T$.
When all vertices are terminals, i.e., $T = V(D)$, \SFAST{} degenerates to \FAST{}.

\FAST{} was shown to be $\NP$-complete by three independent works~\cite{siamdmAlon06,aaaiConitzer06,cpcCharbitTY07}, which implies that \SFAST{} is also $\NP$-complete.
From an approximation perspective, \FAST{} admits a polynomial time approximation scheme (PTAS)~\cite{stocKenyon-MathieuS07}, whereas no approximation algorithm with the constant approximation ratio is currently known for \SFAST{}.
The study of the parameterized complexity for \FAST{} has a rich history.
Raman and Saurabh~\cite{tcsRamanS06} showed that \FAST{} is fixed-parameter tractable by giving an algorithm with the running time $2.415^{k}n^{\bigO{1}}$.
Latter, Alon et al.\ \cite{icalpAlonLS09} gave a sub-exponential time parameterized algorithm running in time $2^{\bigO{\sqrt{k}\log^{2}k}}n^{\bigO{1}}$.
Currently, the fastest known algorithm for \FAST{} runs in $2^{\bigO{\sqrt{k}}}n^{\bigO{1}}$~\cite{isaacKarpinskiS10}.

A parameterized problem is said to admit a polynomial kernel if there exists a polynomial-time algorithm (with respect to the input size), known as a kernelization algorithm, that reduces the input instance to an instance whose size is bounded by a polynomial $p(k)$ in $k$, while preserving the answer.
 This reduced instance is referred to as a $p(k)$ kernel for the problem.
Kernelization has been a central focus of research in parameterized complexity in recent years.
By considering the triangles in tournaments, Dom et al.\ \cite{jdaDomGHNT10} and Alon et al.\ \cite{icalpAlonLS09} independently gave the $\bigO{k^{2}}$-vertex kernel for \FAST{}.
Finally, by using the PTAS for \FAST{}~\cite{stocKenyon-MathieuS07}, Bessy et al.\ \cite{jcssBessyFGPPST11} showed that for any fixed $\varepsilon > 0$, \FAST{} admits a $(2 + \varepsilon)k$-vertex kernel.
\SFAST{} is significantly more challenging.
To the best of our knowledge, no polynomial kernel was known before.

In this work, by introducing the novel concepts of regular orders and rich vertices, we present the first polynomial kernel for \SFAST{}.
We prove that \SFAST{} admits an $\bigO{(\appr k)^{2}}$-vertex kernel if \SFAST{} has an approximation algorithm with an approximation ratio $\appr$.
In general digraphs, it is known that \textsc{Subset Feedback Arc Set} (\SFAS) has an approximation algorithm with the approximation ratio $\appr = \bigO{\log k \log \log k}$~\cite{combinatoricaSeymour95,algorithmicaEvenNSS98}, we finally obtain an almost quadratic vertex kernel for \SFAST{}.

The paper is organized as follows.
In Section~\ref{SEC: PRE}, we give some definitions and preliminary results regarding \SFAST{}.
In Section~\ref{SEC: KERNEL}, we give an almost quadratic vertex kernel for $\SFAST$. 
Finally, we conclude with some remarks in Section~\ref{SEC: CONCLU}.

\section{Preliminary} \label{SEC: PRE}

\subsection{Digraphs and Tournaments}

In this paper, we work with a digraph $D$, where $V(D)$ denotes its vertex set and $A(D)$ denotes its arc set.
The number of vertices in $D$ is denoted by $n = |V(D)|$.
For a vertex $v \in V(D)$, we denote its \emph{in-neighborhood} by $N^{-}(v) \coloneq \{ u \in V(D) : uv \in A(D) \}$ and its \emph{out-neighborhood} by $N^{+}(v) \coloneq \{u \in V(D) : vu \in A(D) \}$.
Additionally, for any subset of vertices $X \subseteq V(D)$, we simply write $N^{-}_{X}(v) \coloneq N^{-}(v) \cap X$ and $N^{+}_{X}(v) \coloneq N^{+}(v) \cap X$.
For a vertex subset $X \subseteq V(D)$, the digraph induced by $X$ is denoted by $D[X]$, and we simply write $D - X \coloneq D[V(D) \setminus X]$.
For an arc subset $Y \subseteq A(D)$, we use $D - Y$ denote the digraph $D'$ obtained by removing arcs in $Y$ from $D$; that is, $V(D') = V(D)$ and $A(D') = A(D) \setminus Y$.

A \emph{tournament} is a directed graph with exactly one directed arc between every pair of vertices.
Let $\order = (\vv{1}, \vv{2}, \ldots, \vv{n})$ be an order of vertices, where $\vi$ is the $i$th vertex according to $\order$.
An \emph{interval} $I = [\vl, \vr]$ (w.r.t.\ $\order$) is defined as a vertex subset $\{ \vi \in V(D) : l \leq i \leq r \}$, and the length of the interval $I$ is expressed as $|I| = r - l + 1$.
For two vertices $\vi$ and $\vj$ with $1 \leq i < j \leq n$, we say $\vi$ is on the left of $\vj$, and $\vj$ is on the right of $\vi$.
A \emph{forward arc} (w.r.t.\ $\order$) is an arc $\vi\vj$ with $1 \leq i < j \leq n$, and a \emph{backward arc} (w.r.t.\ $\order$) is an arc $\vi\vj$ with $1 \leq j < i \leq n$.
Given a backward arc $e = \vr\vl$, the \emph{span} of $e$ is denoted by the interval $\Span{e} \coloneq [\vl, \vr]$.
For every vertex $v$ in the span of a backward $e$, we say that $e$ is above $v$.

\subsection{Subset Feedback Arc Set in Tournaments}

Given a tournament $D$ and a set of vertices $T \subseteq V(D)$ called terminal set, a cycle containing a terminal is called a \emph{$T$-cycle}.
A $T$-cycle of length three is called a \emph{$T$-triangle}.
An arc set $S$ is called a \emph{$T$-feedback arc set} if $D - S$ contains no $T$-cycles.
More formally, the problem we consider is defined as follows.

\defproblem
{\textsc{Subset Feedback Arc Set in Tournaments} (\SFAST)}
{A tournament $D$, a terminal set $T \subseteq V(D)$, and an integer $k \in \mathbb{N}$.}
{Determine whether there is a subset $S \subseteq A(D)$ of size at most $k$, whose removal makes the remaining graph without $T$-cycles.}

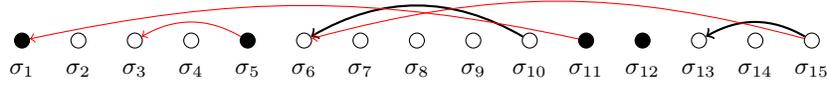
\begin{figure}[t!]
    \centering
        \begin{tikzpicture}
            [
            scale = 0.75,
            nonterminal/.style={draw, shape = circle, fill = white, inner sep = 2pt},
            terminal/.style={draw, shape = circle, fill = black, inner sep = 2pt},
            ]
            \node[terminal, label={[xshift = 0mm, yshift = -7mm]$\vv{1}$}] (v1) at(-7, 0) {};
            \node[nonterminal, label={[xshift = 0mm, yshift = -7mm]$\vv{2}$}] (v2) at(-6, 0) {};
            \node[nonterminal, label={[xshift = 0mm, yshift = -7mm]$\vv{3}$}] (v3) at(-5, 0) {};
            \node[nonterminal, label={[xshift = 0mm, yshift = -7mm]$\vv{4}$}] (v4) at(-4, 0) {};
            \node[terminal, label={[xshift = 0mm, yshift = -7mm]$\vv{5}$}] (v5) at(-3, 0) {};
            \node[nonterminal, label={[xshift = 0mm, yshift = -7mm]$\vv{6}$}] (v6) at(-2, 0) {};
            \node[nonterminal, label={[xshift = 0mm, yshift = -7mm]$\vv{7}$}] (v7) at(-1, 0) {};
            \node[nonterminal, label={[xshift = 0mm, yshift = -7mm]$\vv{8}$}] (v8) at(0, 0) {};
            \node[nonterminal, label={[xshift = 0mm, yshift = -7mm]$\vv{9}$}] (v9) at(1, 0) {};
            \node[nonterminal, label={[xshift = 0mm, yshift = -7mm]$\vv{10}$}] (v10) at(2, 0) {};
            \node[terminal, label={[xshift = 0mm, yshift = -7mm]$\vv{11}$}] (v11) at(3, 0) {};
            \node[terminal, label={[xshift = 0mm, yshift = -7mm]$\vv{12}$}] (v12) at(4, 0) {};
            \node[nonterminal, label={[xshift = 0mm, yshift = -7mm]$\vv{13}$}] (v13) at(5, 0) {};
            \node[nonterminal, label={[xshift = 0mm, yshift = -7mm]$\vv{14}$}] (v14) at(6, 0) {};
            \node[nonterminal, label={[xshift = 0mm, yshift = -7mm]$\vv{15}$}] (v15) at(7, 0) {};

            \draw[->, thick] (v10) to[out = 150, in = 30] (v6);
            \draw[->, thick] (v15) to[out = 150, in = 30] (v13);
            \draw[->, red] (v5) to[out = 150, in = 30] (v3);
            \draw[->, red] (v11) to[out = 168, in = 12] (v1);
            \draw[->, red] (v15) to[out = 165, in = 15] (v6);
        \end{tikzpicture}
    \caption{
        An order $\order$ of $15$ vertices with $\cost{\order} = 3$.
        Black vertices denote the terminals, and white vertices denote the non-terminals.
        In the order $\order$, black thick arcs present unaffected backward arcs, red arcs present affected arcs, and all forward arcs are omitted.
        The arc $\vv{5}\vv{3}$ is an affected arc above the terminal $\vv{5}$; the arc $\vv{11}\vv{1}$ is an affected arc above the terminals $\vv{1}$ and $\vv{11}$; the arc $\vv{15}\vv{6}$ is an affected arc above the terminals $\vv{11}$ and $\vv{12}$;
        The three intervals $[\vv{2}, \vv{4}]$, $[\vv{6}, \vv{10}]$, and $[\vv{13}, \vv{15}]$ are maximal non-terminal intervals w.r.t.\ the order $\order$.}
        \label{FIG: AFFECT}
\end{figure}

Given an order $\order$ of vertices, a \emph{non-temrinal interval} is an interval without terminals.
A non-terminal interval is \emph{maximal} if it is not the proper sub-interval of other non-terminal intervals.
Backward arcs w.r.t.\ $\order$ above some terminal are called \emph{affected arcs}.
A vertex is called \emph{affected} if it is the endpoint of an affected backward arc; otherwise, $v$ is called \emph{unaffected}.
The cost of an order $\order$, denoted by $\cost{\order}$, is defined as the number of affected arcs w.r.t.\ $\order$ (cf. Fig.~\ref{FIG: AFFECT}).

\begin{proposition} \label{PROP: RANK}
    A tournament $D$ has a $T$-feedback arc set of size at most $k$ if and only if there exits an order $\order$ of vertices with $\cost{\order} \leq k$.
\end{proposition}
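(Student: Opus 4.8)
The plan is to prove the two implications separately, in each case connecting the global condition of having no $T$-cycle to the local, order-based notion of an \emph{affected} arc. The conceptual core, shared by both directions, is the observation that after deleting a $T$-feedback arc set no terminal can lie on \emph{any} directed cycle, not merely on a short one.

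For the direction from an order to a feedback arc set, I would take an order $\order$ with $\cost{\order} \le k$ and let $S$ be its set of affected arcs, so that $|S| = \cost{\order} \le k$ by definition; it then suffices to show that $D - S$ contains no $T$-cycle. Arguing by contradiction, suppose $C$ is a $T$-cycle of $D - S$ passing through a terminal $t = \vv{p}$. The crucial claim is that every $T$-cycle contains an affected arc, which contradicts the fact that $C$ uses no arc of $S$. To prove the claim I would partition the vertices of $C$ according to whether their position is at least $p$ or strictly less than $p$, so that $t$ lies on the right part. If the left part is nonempty, then there is an arc of $C$ from the right part to the left part; such an arc $\vv{i}\vv{j}$ satisfies $i \ge p > j$, hence is backward and its span $[\vv{j}, \vv{i}]$ contains $t = \vv{p}$, so it is affected. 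If instead every vertex of $C$ has position at least $p$, then $t$ is leftmost on $C$ and the arc of $C$ entering $t$ is backward with span containing $t$, again affected.

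For the direction from a feedback arc set to an order, I would analyse the strongly connected components of $D' \coloneq D - S$. Since $D'$ has no $T$-cycle, no terminal lies on any cycle of $D'$; hence every terminal forms a singleton strongly connected component, and every nontrivial component consists entirely of non-terminals. I would then fix a topological order of the condensation of $D'$, list the components in that order, and order the vertices inside each component arbitrarily, so that in the resulting order $\order$ each component occupies one contiguous block.

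It remains to bound $\cost{\order}$. Every backward arc of $\order$ belongs either to $D'$ or to $S$. A backward arc of $D'$ cannot run between two distinct components, since such arcs are forward by the choice of topological order; thus it lies inside a single component block, whose vertices form a terminal-free interval, so its span contains no terminal and it is not affected. Consequently every affected arc lies in $S$, which yields $\cost{\order} \le |S| \le k$. I expect the reverse direction to be the main obstacle: in contrast with \FAST{}, the graph $D - S$ need not be acyclic, so no topological order of the whole graph exists, and the argument genuinely depends on trapping all surviving backward arcs inside terminal-free blocks of the order.
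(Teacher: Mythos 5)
Your proposal is correct and takes essentially the same route as the paper's proof: the same strongly-connected-component/condensation construction (terminals become singleton components, nontrivial components are terminal-free contiguous blocks) for one direction, and taking $S$ to be the set of affected arcs for the other. The only difference is that you spell out the case analysis showing that every $T$-cycle through a terminal $t = \vv{p}$ must use a backward arc whose span contains $t$, a step the paper's proof asserts without detail.
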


\begin{proof}
    On the one hand, suppose there exists an arc set $S$ of size $|S| \leq k$ such that $D - S$ contains no $T$-cycle.
    In the subgraph $D - S$, every strongly connected component of size at least two contains no terminal.
    In addition, by considering the topological order of all strongly connected components, one can easily construct an order $\order$ of vertices such that the two endpoints of each backward belong to the same strongly connected component.
    This indicates that in the subgraph $D - S$, no backward arc is above any terminal.
    Consequently, in the tournament $D$, we have $\cost{\order} \leq |S| \leq k$.
    
    On the other hand, suppose there exists an order $\order$ of vertices with $\cost{\order} \leq k$.
    Let $S$ be the set of all affected arcs, then $|S| \leq k$.
    After removing $S$ from $D$, for any terminal $t \in T$, no backward arc is above $t$, which ensures that the terminal $t$ is not contained in any $T$-cycle in $D - S$.
    As a result, $S$ is a $T$-feedback arc set of $D$ with the size at most $k$.
    \qed
\end{proof}

Directly following Proposition~\ref{PROP: RANK}, we define the ranking version, which is equivalent to the original version of \SFAST{}.

\defproblem
{Ranking Version of \SFAST{}}
{A tournament $D$, a terminal set $T \subseteq V(D)$, and an integer $k \in \mathbb{N}$.}
{Determine whether there is an order $\order$ of vertices in $V(D)$ such that $\cost{\order} \leq k$.}

In our arguments, we will also need the following characterization.

\begin{proposition} \label{PROP: REVERSE}
    Suppose an arc $e$ is contained in a minimum $T$-feedback arc set of $D$.
    Let $D'$ be the tournament obtained from $D$ by reversing $e$.
    Then, $S$ is a minimum $T$-feedback arc set containing $e$ of $D$ if and only if $S \setminus \Set{e}$ is a minimum $T$-feedback arc set of $D'$.
\end{proposition}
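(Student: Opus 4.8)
The plan is to reduce both implications to a single quantitative fact -- that reversing $e$ lowers the minimum $T$-feedback arc set size by exactly one -- and then read off the two directions with little extra work. Write $e = uv$, so that $D'$ carries the reversed arc $vu$ in place of $uv$ while all other arcs agree, and set $A_0 = A(D) \setminus \Set{uv}$. Since $e$ lies in the minimum $T$-feedback arc sets under consideration, I only deal with arc sets of the form $S = S_0 \cup \Set{e}$ with $S_0 \subseteq A_0$, so that $S_0 = S \setminus \Set{e}$ is simultaneously a subset of $A(D)$ and of $A(D')$. The whole argument rests on the elementary observation that $D - S$, $D - S_0$ and $D' - S_0$ share the same arcs on $A_0 \setminus S_0$ and differ only on the pair $\Set{u,v}$: writing $H = D - S$ for this common core, one has $D - S_0 = H + uv$ and $D' - S_0 = H + vu$ (adding the single indicated arc). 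I will also reuse the structural fact already exploited in the proof of Proposition~\ref{PROP: RANK}, namely that a digraph has no $T$-cycle iff no strongly connected component of size at least two contains a terminal. Denote by $\tau(D)$ and $\tau(D')$ the minimum sizes of a $T$-feedback arc set of $D$ and of $D'$.

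The crux is the inequality $\tau(D') \le \tau(D) - 1$, and this is where I expect the main difficulty. Starting from the hypothesised minimum $T$-feedback arc set $S^{*} = S_0^{*} \cup \Set{e}$ of $D$, I want to show that $S_0^{*}$ is already a $T$-feedback arc set of $D'$, i.e.\ that $D' - S_0^{*} = H^{*} + vu$ has no $T$-cycle, where $H^{*} = D - S^{*}$. Any hypothetical $T$-cycle must use the new arc $vu$, hence yields a path $P$ from $u$ to $v$ in $H^{*}$ whose closure with $vu$ passes through a terminal. On the other hand, minimality of $S^{*}$ forces $S_0^{*}$ to fail as a $T$-feedback arc set of $D$, so $D - S_0^{*} = H^{*} + uv$ has a $T$-cycle through $uv$, giving a path $Q$ from $v$ to $u$ in $H^{*}$ whose closure with $uv$ passes through a terminal. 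The two paths place $u$ and $v$ -- and in fact every vertex of $P$ and of $Q$ -- in one common strong component $B$ of $H^{*}$; but then the $T$-cycle certifying $Q$ has all its vertices in $B$, so $B$ is a strong component of size at least two containing a terminal, contradicting that $H^{*}$ has no $T$-cycle. This rules out $P$, proving $S_0^{*}$ is a $T$-feedback arc set of $D'$ and hence $\tau(D') \le \tau(D) - 1$.

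For the reverse inequality $\tau(D) \le \tau(D') + 1$ I take a minimum $T$-feedback arc set $S'$ of $D'$. First I argue $vu \notin S'$: otherwise replacing $vu$ by $uv$ yields a $T$-feedback arc set of $D$ of the same size, because $D - ((S' \setminus \Set{vu}) \cup \Set{uv})$ coincides with $D' - S'$; this would give $\tau(D) \le \tau(D')$, which together with the bound just proved is absurd. Hence $S' \subseteq A_0$, and then $D - (S' \cup \Set{uv}) = (D' - S') - \Set{vu}$ inherits the absence of $T$-cycles, so $S' \cup \Set{uv}$ is a $T$-feedback arc set of $D$ of size $\tau(D') + 1$. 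Combining the two bounds yields $\tau(D') = \tau(D) - 1$.

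Finally I assemble the equivalence. If $S = S_0 \cup \Set{e}$ is a minimum $T$-feedback arc set of $D$, the companion-path argument of the second paragraph (applied to $S$ itself, using only that $S$ is a minimum $T$-feedback arc set containing $e$) shows $S_0$ is a $T$-feedback arc set of $D'$, and $|S_0| = \tau(D) - 1 = \tau(D')$ makes it a minimum one. Conversely, if $S_0 = S \setminus \Set{e}$ is a minimum $T$-feedback arc set of $D'$, then $D - S = (D' - S_0) - \Set{vu}$ has no $T$-cycle, so $S$ is a $T$-feedback arc set of $D$, and $|S| = \tau(D') + 1 = \tau(D)$ together with $e \in S$ makes it a minimum one containing $e$. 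The only genuinely delicate point throughout is the strong-component argument of the second paragraph, where minimality is used to manufacture the companion path $Q$ that forces a terminal into a nontrivial strong component of $H^{*}$; everything else is bookkeeping on the three digraphs sharing the core $H$.
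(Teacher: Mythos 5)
Your proof is correct, but it takes a genuinely different route from the paper's. The paper leans entirely on Proposition~\ref{PROP: RANK}: a minimum $T$-feedback arc set $S$ is realized as the set of affected arcs of some order $\order$, reversing the affected arc $e$ simply deletes it from that set, so $S \setminus \Set{e}$ is the affected-arc set of the same order in $D'$; the converse is the same short exchange argument you give at the end. You instead work directly on the digraphs and isolate the quantitative fact $\tau(D') = \tau(D) - 1$, proving the hard inequality $\tau(D') \le \tau(D) - 1$ by the companion-path argument: a putative $T$-cycle through the reversed arc $vu$ in $D' - S_0^{*}$ gives a $u$--$v$ path in $H^{*} = D - S^{*}$, while minimality of $S^{*}$ supplies a $v$--$u$ path closing a $T$-cycle in $D - S_0^{*}$, and together these put a terminal into a nontrivial strong component of $H^{*}$, contradicting that $S^{*}$ is a solution. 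This is the step the paper disposes of with the (somewhat tersely justified) claim that a minimum solution \emph{equals} the affected-arc set of the order built in Proposition~\ref{PROP: RANK}; your argument makes that step explicit and self-contained, uses only the strong-component characterization of $T$-acyclicity, and in fact never uses that $D$ is a tournament, so it proves the statement for arbitrary digraphs. The paper's route is shorter given that the order machinery is already central to the rest of the development; yours is more elementary and more general, at the cost of the extra bookkeeping around $\tau$ and the case $vu \in S'$.
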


\begin{proof}
    Suppose $S$ is a minimum $T$-feedback arc set of $D$.
    By Proposition~\ref{PROP: RANK}, there exists an order $\order$ of vertices with $\cost{\order} \leq |S|$.
    Observe that $S$ is of minimum size.
    It follows that $S$ must be exactly the set of all affected arc set w.r.t.\ $\order$.
    By construction, $D'$ is obtained from $D$ by reserving only $e$.
    Thus, $S \setminus \Set{e}$ is the set of all affected arcs w.r.t.\ $\order$ in $D'$, and by Proposition~\ref{PROP: RANK}, it is a minimum $T$-feedback arc set of $D'$.
    
    Conversely, suppose $S \setminus \Set{e}$ is a minimum $T$-feedback arc set of $D'$.
    It is clear that $S$ is a $T$-feedback arc set of $D$.
    Assume for contradiction that $S$ is not minimum.
    By the condition of the lemma, there exists a minimum $T$-feedback arc set $S'$ of $D$, such that $e \in S'$ and $|S'| < |S|$.
    It follows that $S' \setminus \Set{e}$ is a $T$-feedback arc set of $D'$ with the size smaller than $S \setminus \Set{e}$, contradicting the minimality of $S \setminus \Set{e}$.
    This completes our proof.
    \qed
\end{proof}

With the help of Proposition~\ref{PROP: REVERSE}, if we know that an arc belongs to a minimum subset feedback arc set, we can reduce the instance by reversing the arc and decreasing the $k$ by $1$. 

\section{Kernelization for \SFAST{}} \label{SEC: KERNEL}

In this section, we begin by introducing the concept of regular orders and proposing several basic reduction rules.
Subsequently, by analyzing the structural properties of optimal orders, we establish an almost quadratic kernel for \SFAST{}.

\subsection{Regular Orders} \label{SUBSEC: REGULAR}

This subsection focuses on the notion of \emph{regular} orders.
For any given vertex order $\order = (\vv{1}, \vv{2}, \ldots, \vv{n})$, we can construct a regular order $\regular = (\regular_{1}, \regular_{2}, \ldots, \regular_{n})$ in polynomial time that preserves the same cost while ensuring that the number of backward arcs of $\regular$ is no more than that of $\order$.
We refer to $\regular$ as a regularization of $\order$.

\begin{definition}[regular orders] \label{DEF: REGULAR}
    Given an instance $\instance$ of \SFAST{}, an order $\regular$ of vertices is called \emph{regular} if for any non-terminal interval $I = [\regular_{l}, \regular_{r}]$, the following conditions hold:
    \begin{itemize}
        \item $\regular_{l}$ has at least $\ceil{(r - l)/2}$ out-neighbors in $I$, i.e., $|N^{+}_{I}(\regular_{l})| \geq \ceil{(r - l)/2}$; and 
        \item $\regular_{r}$ has at least $\ceil{(r - l)/2}$ in-neighbors in $I$, i.e., $|N^{-}_{I}(\regular_{r})| \geq \ceil{(r - l)/2}$.
    \end{itemize}
\end{definition}

\begin{definition}[regularization] \label{DEF: REGULARIZATION}
    Given an instance $\instance$ of \SFAST{}, let $\regular$ be an order of vertices.
    A regular order $\regular$ of vertices is called a \emph{regularization} of $\order$ if it satisfies that
    \begin{itemize}
        \item two orders $\order$ and $\regular$ have the identical number of affected arcs, i.e., $\cost{\order} = \cost{\regular}$; and
        \item each terminal $t \in T$ has the identical index in $\order$ and $\regular$, i.e., $t = \vi = \regular_{i}$. 
    \end{itemize}
\end{definition}

\begin{lemma} \label{LEM: REGULAR}
    Let $\instance$ be an instance of \SFAST{}.
    For any order $\order$ of the vertices, there exists at least one regularization of $\order$.
    Furthermore, a regularization $\regular$ of $\order$ can be found in polynomial time.
\end{lemma}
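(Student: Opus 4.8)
The plan is to obtain $\regular$ from $\order$ by a sequence of local rearrangements, each of which only permutes non-terminal vertices inside a single maximal non-terminal interval. The first thing I would establish is the key invariant: any such permutation preserves both $\cost{\order}$ and the position of every terminal, so that each intermediate order is automatically a candidate regularization in the sense of Definition~\ref{DEF: REGULARIZATION}, except possibly for regularity itself. Terminal positions are trivially preserved since terminals are never moved. For the cost, I would classify every arc $e = pq$ by how many endpoints lie in the maximal non-terminal interval $J$ being permuted: if both endpoints lie in $J$, then $e$ can never be affected, because its span (if backward) lies inside $J$ and $J$ contains no terminal; if exactly one endpoint lies in $J$, the other lies entirely on one fixed side of $J$, so the direction of $e$ is unchanged and, since $J$ contributes no terminal to $\Span{e}$, its affected status is unchanged; if neither endpoint lies in $J$, the arc and its span are untouched. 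Hence the number of affected arcs is invariant.

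Next I would introduce the monovariant, namely the number of backward arcs of the current order. Suppose the current order is not regular; then by Definition~\ref{DEF: REGULAR} there is a non-terminal interval $I = [\vl, \vr]$ violating one of the two conditions. If $|N^{+}_{I}(\vl)| < \ceil{(r-l)/2}$, then since $|N^{+}_{I}(\vl)| + |N^{-}_{I}(\vl)| = r - l$, a short case check on the parity of $r - l$ shows that $\vl$ has strictly more in-neighbors than out-neighbors inside $I$. I would then move $\vl$ to the rightmost position of $I$: every forward arc from $\vl$ to an out-neighbor in $I$ becomes backward, every backward arc from an in-neighbor in $I$ to $\vl$ becomes forward, and all remaining arcs keep their direction; since the in-neighbors strictly outnumber the out-neighbors, the total number of backward arcs strictly decreases. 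The symmetric move, sending $\vr$ to the leftmost position of $I$, handles a violation of the second condition. Each such move is a permutation of non-terminals inside $I \subseteq J$, so by the invariant it preserves $\cost{\order}$ and all terminal positions.

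Finally I would assemble the argument. The number of backward arcs is a non-negative integer at most $\binom{n}{2}$ that strictly decreases at every step, so after $\bigO{n^{2}}$ moves the process reaches an order with no violating non-terminal interval, which is by definition regular. This order has the same cost and the same terminal indices as $\order$, and no more backward arcs, hence is a regularization of $\order$; this proves existence. Since one can scan the $\bigO{n^{2}}$ candidate intervals and check the two degree conditions in polynomial time, and each move is applied in polynomial time, the whole procedure runs in polynomial time, giving the algorithmic claim.

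I expect the main obstacle to be the cost-invariance claim: one must verify carefully that permuting non-terminals inside a maximal non-terminal interval leaves the affected-arc count unchanged, since this is exactly the property that makes the local moves ``free.'' Everything else --- the existence of a strictly decreasing move at every non-regular order, and the polynomial bound on the number of moves --- then follows from elementary tournament counting together with the monovariant.
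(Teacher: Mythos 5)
Your proposal is correct and follows essentially the same route as the paper: the same two local moves (sending $\vl$ to the right end or $\vr$ to the left end of a violating non-terminal interval), the same observation that such moves fix terminal positions and cannot change which arcs are affected, and the same monovariant (the number of backward arcs strictly decreases, giving an $\bigO{n^{2}}$ bound on the number of steps). The only cosmetic difference is that the paper triggers a move on the condition $|N^{+}_{I}(\vl)| < |N^{-}_{I}(\vl)|$ and verifies regularity at termination, whereas you start from a violation of Definition~\ref{DEF: REGULAR} and derive that strict inequality; these are equivalent.
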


\begin{proof}
    The following local search algorithm is used to obtain the desired regular order.
    At each step, the algorithm maintains the order $\regular$, starting with the initial order $\regular = \order$.
    As long as a non-terminal interval $I = [\regular_{l}, \regular_{r}]$ satisfies one of the following two cases, the algorithm performs the corresponding operation.
    \begin{enumerate}[{Case} 1:]
        \item If $|N^{+}_{I}(\regular_{l})| < |N^{-}_{I}(\regular_{l})|$, the algorithm modifies $\regular$ by moving $\regular_{l}$ to the right of $\regular_{r}$.
        \item If $|N^{-}_{I}(\regular_{r})| < |N^{+}_{I}(\regular_{r})|$, the algorithm modifies $\regular$ by moving $\regular_{r}$ to the left of $\regular_{l}$.
    \end{enumerate}
    The algorithm terminates when no non-terminal interval satisfies either of the above cases.
    
    When the algorithm terminates, for every non-terminal interval $I = [\regular_{l}, \regular_{r}]$, the order $\regular$ satisfies
    \begin{equation*}
        |N^{+}_{I}(\regular_{l})| 
        \geq |N^{-}_{I}(\regular_{l})| 
        = (|I| - 1) - |N^{+}_{I}(\regular_{l})| 
        = (r - l) - |N^{+}_{I}(\regular_{l})|,
    \end{equation*}
    and
    \begin{equation*}
        |N^{-}_{I}(\regular_{r})| 
        \geq |N^{+}_{I}(\regular_{r})| 
        = (|I| - 1) - |N^{-}_{I}(\regular_{r})|
        = (r - l) - |N^{-}_{I}(\regular_{r})|,
    \end{equation*}
    which implies that
    \begin{equation*}
        |N^{+}_{I}(\regular_{l})| \geq \ceil{(r - l)/2},
    \end{equation*}
    and
    \begin{equation*}
        |N^{-}_{I}(\regular_{r})| \geq \ceil{(r - l)/2}.
    \end{equation*}
    Thus, by Definition~\ref{DEF: REGULAR}, $\regular$ is a regular order when the algorithm terminates.
    
    Next, observe that each operation only adjusts the sub-order of vertices within the non-terminal interval $I$, ensuring that the index of every terminal remains unchanged.
    Additionally, the endpoints of any affected arc do not belong to the same non-terminal interval.
    This also implies that a backward arc is affected w.r.t.\ $\order$ if and only if it is affected w.r.t.\ $\regular$.
    As a result, $\bar{\order}$ is a regularization of $\order$.
    
    Finally, we show that the algorithm terminates in polynominal time.
    Note that there are at most $n^{2}$ distinct non-terminal intervals, and thus the algorithm can check whether any non-terminal interval satisfies one of the two cases in time $\bigO{n^{2}}$ per step.
    Furthermore, in each step, the number of backward arcs is decreased by at least $1$ (specifically, by $|N^{-}_{I}(\regular_{l})| - |N^{+}_{I}(\regular_{l})| \geq 1$ in case 1, and by $|N^{+}_{I}(\regular_{r})| - |N^{-}_{I}(\regular_{r})| \geq 1$ in case 2).
    Therefore, the algorithm terminates after performing at most $\bigO{n^{2}}$ steps.
    We conclude that a regularization of $\order$ can be found in time $\bigO{n^{4}}$, completing the proof.
    \qed
\end{proof}

Lemma~\ref{LEM: REGULAR} shows that a regularization of $\order$ can be found in polynomial time using the local search approach.
We note that the regularization of an order is not necessarily unique, as different sequences of local operations may lead to distinct regular orders.
With this result, we have established a foundation for efficiently transforming any given order into a regular one while preserving key properties, which will be instrumental in our subsequent analysis and kernelization design.

\subsection{Basic Reduction Rules} \label{SUBSEC: BASIC REDUCTION}

Our kernelization algorithm builds upon an approximation algorithm for \SFAST{}.
It is known that there exists an approximation algorithm for \SFAS{} with an approximation ratio of $\appr = \bigO{\log k \log \log k}$~\cite{combinatoricaSeymour95,algorithmicaEvenNSS98}.
This allows us to find an order $\order$ of vertices with $\cost{\order} \leq \appr k$; otherwise, we can immediately return a trivial NO-instance.
Furthermore, as discussed in Section~\ref{SUBSEC: REGULAR}, we can construct a regularization of any order of vertices while preserving its cost.
Therefore, throughout this subsection, we assume without loss of generality that the order is regular.

In our kernelization, we give several \emph{reduction rules} to obtain a kernel for \SFAST{}.
Each reduction rule transforms the input instance into another instance.
A reduction rule is said to be \emph{safe} if the input instance is a YES-instance if and only if the output instance is a YES-instance.
We note that some reduction rules rely on the regular order of vertices.
When applying a reduction rule, a vertex may be removed or an arc may be reversed, and the solution size parameter $k$ does not increase.
Since the order after applying a reduction rule may no longer be regular, the algorithm will compute a regularization of the new order, and the cost of the new order is also bounded by $\appr k$.

In this subsection, we provide four basic reduction rules.

\begin{reduction} \label{REDUCTION: NO}
    If $k \leq 0$ and the tournament $D$ contains a $T$-cycle, then return a trivial NO-instance.
\end{reduction}

\begin{reduction} \label{REDUCTION: YES}
    If $k \geq 0$ and the tournament $D$ contains no $T$-cycle, then return a trivial YES-instance.
\end{reduction}

The safeness of Reduction Rules~\ref{REDUCTION: NO} and~\ref{REDUCTION: YES} are straightforward.
Additionally, by leveraging the topological order, we can determine in polynomial time whether a tournament $D$ with the terminal set $T$ contains a $T$-cycle.

\begin{reduction} \label{REDUCTION: NONCYCLE}
    If a vertex $v$ is not contained in any $T$-cycle, then delete $v$ from $D$.
\end{reduction}

\begin{reduction} \label{REDUCTION: K CYCLES}
    Let $e = uv$ be an affected arc above a terminal $t \in T$.
    If there are $k + 1$ arc-disjoint paths from $v$ to $u$ that contain the terminal $t$ and consist solely of forward arcs, then reverse $e$ in $D$ and decrease $k$ by $1$.
\end{reduction}

It is easy to see that if Reduction Rule~\ref{REDUCTION: NONCYCLE} cannot be applied, for every terminal $t \in T$, there exists at least one affected backward arc above $t$.

\begin{lemma}
    Reduction Rules~\ref{REDUCTION: NONCYCLE} and~\ref{REDUCTION: K CYCLES} are safe and can be applied in polynomial time.
\end{lemma}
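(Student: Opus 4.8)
The plan is to prove both reduction rules safe and then bound their running time; the substance lies entirely in Reduction Rule~\ref{REDUCTION: K CYCLES}. For Reduction Rule~\ref{REDUCTION: NONCYCLE}, write $D' = D - v$ with terminal set $T' = T \setminus \Set{v}$. In the forward direction, any $T$-feedback arc set $S$ of $D$ restricts to $S \cap A(D')$, which is a $T'$-feedback arc set of $D'$ of size at most $|S|$. In the backward direction, take a $T'$-feedback arc set $S'$ of $D'$ with $|S'| \le k$ and view it as a subset of $A(D)$. Since deleting arcs cannot create cycles, any hypothetical $T$-cycle of $D - S'$ either avoids $v$, hence lives in $D' - S'$ and is a $T'$-cycle there (impossible), or passes through $v$, hence is a $T$-cycle of $D$ through $v$ (contradicting that $v$ lies on no $T$-cycle of $D \supseteq D - S'$). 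Thus $S'$ is a $T$-feedback arc set of $D$, giving the equivalence.

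For Reduction Rule~\ref{REDUCTION: K CYCLES}, the heart is a cycle-packing argument. Let $P_{1}, \ldots, P_{k+1}$ be the $k+1$ arc-disjoint forward $v$-$u$ paths through $t$. Because $e = uv$ is a backward arc above $t$, appending $e$ to each $P_{i}$ yields a directed cycle containing $t$, i.e.\ a $T$-cycle; as the $P_{i}$ are arc-disjoint and use only forward arcs while $e$ is backward, these $k+1$ $T$-cycles pairwise share no arc other than $e$. Consequently any $T$-feedback arc set that does not contain $e$ must delete at least one arc from each $P_{i}$, hence at least $k+1$ distinct arcs, so \emph{every} $T$-feedback arc set of size at most $k$ contains $e$. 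Now let $D'$ be obtained from $D$ by reversing $e$. If $\instance$ is a YES-instance, a minimum $T$-feedback arc set $S$ satisfies $|S| \le k$, whence $e \in S$; by Proposition~\ref{PROP: REVERSE}, $S \setminus \Set{e}$ is a minimum $T$-feedback arc set of $D'$ of size $|S| - 1 \le k - 1$, so $(D', T, k-1)$ is a YES-instance.

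For the converse I would not use Proposition~\ref{PROP: REVERSE} but instead a direct lifting bound: reversing one arc changes the optimum by at most one. Given any $T$-feedback arc set $S'$ of $D'$, consider $S = (S' \setminus \Set{\bar{e}}) \cup \Set{e}$, where $\bar{e} = vu$ is the reversed copy. Since $D - \Set{e}$ and $D' - \Set{\bar{e}}$ are the same digraph, one checks $D - S = (D' - S') - \Set{\bar{e}}$, which is $T$-cycle-free because $D' - S'$ is; moreover $|S| \le |S'| + 1$. Hence $(D', T, k-1)$ being a YES-instance forces $\instance$ to be a YES-instance, and combining the two directions proves safeness. The delicate point I must handle carefully is exactly this asymmetry: Proposition~\ref{PROP: REVERSE} applies only to an arc lying in a \emph{minimum} solution, so the packing argument is needed precisely to certify $e \in S$ when the optimum is at most $k$, while the reverse implication is obtained separately through the lifting inequality.

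Finally, for the running time, both conditions reduce to standard network-flow and reachability computations. For Reduction Rule~\ref{REDUCTION: NONCYCLE}, deciding whether $v$ lies on a $T$-cycle amounts to testing, for each terminal $t$, whether there exist internally vertex-disjoint paths from $v$ to $t$ and from $t$ to $v$ (equivalently, a simple cycle through $v$ and $t$), a disjoint-paths query decidable in polynomial time on the tournament $D$. For Reduction Rule~\ref{REDUCTION: K CYCLES}, since every forward path is monotone in the order $\order$ and $t$ lies between $v$ and $u$, any forward $v$-$u$ path through $t$ splits into a forward $v$-$t$ path and a forward $t$-$u$ path occupying disjoint index ranges; thus the maximum number of arc-disjoint such paths equals the minimum of the maximum arc-disjoint $v$-$t$ count and the maximum arc-disjoint $t$-$u$ count in the acyclic digraph of forward arcs, each obtained by one unit-capacity max-flow, and comparing this value with $k+1$ is polynomial. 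I expect the safeness of Reduction Rule~\ref{REDUCTION: K CYCLES} to be the main obstacle, specifically reconciling the cycle-packing bound with the minimality hypothesis of Proposition~\ref{PROP: REVERSE} and supplying the reverse implication without it.
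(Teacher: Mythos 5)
Your proposal is correct and follows essentially the same route as the paper: the same observation that deleting a vertex on no $T$-cycle preserves solutions in both directions, the same packing of $k+1$ arc-disjoint $T$-cycles through $e$ to force $e$ into every solution of size at most $k$, and the same reduction of the path-counting condition to two unit-capacity max-flow computations in the DAG of forward arcs. Your only deviation is that you prove the converse direction of Reduction Rule~\ref{REDUCTION: K CYCLES} by a direct lifting argument instead of invoking Proposition~\ref{PROP: REVERSE}, which is a slightly more careful treatment of the same step rather than a different approach.
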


\begin{proof}
    We first consider Reduction Rule~\ref{REDUCTION: NONCYCLE}.
    Let $S$ be a $(T \setminus \Set{v})$-feedback arc set of $D - v$.
    By definition, we know $D - S - v$ contains no $T$-cycles.
    Since $v$ is not contained in any $T$-cycles, $D - S$ also contains no $T$-cycles.
    Thus, $S$ is a $T$-feedback arc set of the input graph $D$.

    Reduction Rule~\ref{REDUCTION: NONCYCLE} can be executed by checking whether there is a path from $N^{+}(v)$ to $N^{-}(v)$ in $D - v$ if $v$ is a terminal, or whether there is a path containing terminals from $N^{+}(v)$ to $N^{-}(v)$ in the $D - v$ if $v$ is a non-terminal.
    Additionally, after applying Reduction Rule~\ref{REDUCTION: NONCYCLE}, a vertex is removed, which means this rule can be executed at most $n$ times.
    Consequently, Reduction Rule~\ref{REDUCTION: NONCYCLE} can be applied in polynomial time.
     
    Next, we analyze Reduction Rule~\ref{REDUCTION: K CYCLES}.
    Observe that each path from $v$ to $u$ containing $t$, together with the backward arc $e$, forms a $T$-cycle.
    Thus, $e$ must belong to any $T$-feedback arc set, as any such solution set that excludes $e$ would contain at least one arc from each of the $k + 1$ paths, requiring a solution of size at least $k + 1$.
    Consequently, there exists an order with cost at most $k$ if and only if there exists an order with cost at most $k - 1$ after reversing $e$.
    This leads to the safeness of Reduction Rule~\ref{REDUCTION: K CYCLES}.

    Observe that forward arcs w.r.t.\ $\order$ form a directed acyclic graph $D'$.
    Let $f_{vt}$ (resp. $f_{tu}$) denote the maximum flow from $v$  to $t$ (resp. from $t$ to $u$) in $D'$, and both of which can be computed in polynomial time.
    Since the maximum number of arc-disjoint arcs from $v$ to $u$ containing $t$ and consisting of forward arcs equals $\min\Set{f_{ut}, f_{tv}}$, Reduction Rule~\ref{REDUCTION: K CYCLES} can be executed in polynomial time.
    Additionally, after applying Reduction Rule~\ref{REDUCTION: K CYCLES}, $k$ is decreased by one, meaning this rule can be executed at most $k$ times.   
    Consequently, Reduction Rule~\ref{REDUCTION: K CYCLES} can be also applied in polynomial time.
    \qed
\end{proof}

\begin{definition}[reduced instance]
    If Reduction Rules~\ref{REDUCTION: NO}-\ref{REDUCTION: K CYCLES} cannot be applied, the instance is called \emph{reduced}.
\end{definition}

Now, we analyze the properties of the reduced instance.
These properties characterize the possible positions of terminals within a regular order.

\begin{lemma} \label{LEM: TERMINAL LOCATION}
    Let $\instance$ be a reduced instance and $\order$ be a regular order of the vertices.
    Let $t$ be a terminal and $e$ be an affected arc above $t$.
    Suppose $\Span{e} = [\vl, \vr]$, we have $t \in [\vl, \vv{l + \ell}]$ or $t \in [\vv{r - \ell}, \vr]$, where $\ell = 2(\appr + 1) k + 2$.
\end{lemma}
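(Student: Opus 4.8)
The plan is to prove the contrapositive: assuming the instance is reduced, if $t$ were far from both ends of $\Span{e}$ — that is, $t \notin [\vl, \vv{l+\ell}]$ and $t \notin [\vv{r-\ell}, \vr]$ — then Reduction Rule~\ref{REDUCTION: K CYCLES} would apply to $e$, a contradiction. Writing $e = \vr\vl$ and letting $p$ be the index of $t$ (so $l + \ell < p < r - \ell$), it suffices to exhibit $k+1$ arc-disjoint forward paths from $\vl$ to $\vr$ through $t$. Every such path splits at $t$ into a forward $\vl$--$t$ part (living in $[\vl, t]$) and a forward $t$--$\vr$ part (living in $[t, \vr]$); since the two halves share no arcs, it is enough to produce $k+1$ arc-disjoint forward $\vl$--$t$ paths and $k+1$ arc-disjoint forward $t$--$\vr$ paths and concatenate them in pairs. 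I would build all of these using only paths of length two.

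For the left half I count the middles $\vi$ with $l < i < p$ giving a path $\vl \to \vi \to t$, i.e.\ $\vi \in N^{+}(\vl) \cap N^{-}(t)$; distinct middles yield arc-disjoint paths. Their number is at least $(p-l-1)$ minus the number of $\vi \in (\vl,t)$ with $\vi \to \vl$, minus the number with $t \to \vi$. The in-neighbours of $\vl$ split into those inside $\vl$'s maximal non-terminal interval and those outside it (if $\vl$ is itself a terminal the first group is empty); by regularity applied to the non-terminal interval $[\vl, \vv{b}]$, where $\vv{b}$ is its right end, the former number at most $\floor{(b-l)/2} \le (p-l-1)/2$, while the latter are backward arcs into $\vl$ whose span crosses the terminal bounding that interval, hence affected. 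The out-neighbours of $t$ in $(\vl,t)$ are backward arcs issuing from the terminal $t$, hence also affected. The crucial point is that these two affected families are \emph{disjoint} — one consists of arcs pointing into $\vl$, the other of arcs leaving $t$ — so their combined size is at most $\cost{\order} \le \appr k$, not $2\appr k$. Hence the count is at least
\begin{equation*}
    (p - l - 1) - \frac{p-l-1}{2} - \appr k = \frac{p-l-1}{2} - \appr k \ge \frac{\ell}{2} - \appr k = (\appr + 1)k + 1 - \appr k = k + 1 ,
\end{equation*}
using $p - l - 1 \ge \ell = 2(\appr+1)k + 2$.

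The right half is symmetric: for middles $\vj$ with $p < j < r$ I use $t \to \vj \to \vr$, the bad middles being in-neighbours of $t$ (affected, pointing into $t$) and out-neighbours of $\vr$; the latter split into those within $\vr$'s maximal non-terminal interval — bounded by half via the regularity condition on the \emph{right} endpoint of that interval — and those across a terminal, which are affected arcs leaving $\vr$ and are disjoint from the arcs into $t$. The same arithmetic gives at least $k+1$ arc-disjoint forward $t$--$\vr$ paths. Concatenating the two families produces $k+1$ arc-disjoint forward $\vl$--$\vr$ paths through $t$, all made of forward arcs, which triggers Reduction Rule~\ref{REDUCTION: K CYCLES} and contradicts reducedness.

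I expect the main subtlety to be exactly the disjointness observation above: a naive bound would charge the affected budget $\appr k$ twice and fail for $\appr > 1$, so it is essential to notice that the two obstructing affected families (arcs incident to $t$ and arcs incident to the relevant span-endpoint $\vl$ or $\vr$) never coincide, and to combine this with the regular-order guarantee that a vertex has at most half of its within-interval neighbours on the ``wrong'' side. Everything else is routine counting of length-two paths.
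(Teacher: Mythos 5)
Your proof is correct and follows essentially the same route as the paper: both argue the contrapositive by producing $k+1$ arc-disjoint length-two forward paths on each side of $t$ and concatenating them to trigger Reduction Rule~\ref{REDUCTION: K CYCLES}, bounding the bad middles by regularity of $\vl$ (resp.\ $\vr$) within its non-terminal prefix (the paper's interval $X$) plus the total affected-arc budget $\appr k$. Your explicit disjointness remark corresponds to the paper's single inequality $|N^{+}_{X}(\vi)| + |N^{+}_{Y}(\vi)| + |N^{-}_{Y}(\vl)| \leq \appr k$, so the two write-ups differ only in bookkeeping.
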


\begin{proof}
    Suppose $t$ is the $i$th vertex according to the order $\order$, i.e., $t = \vi$.
    Let $\vj$ be the terminal in the interval $J = [\vl, \vi]$ with the minimum index.
    Notice that such a terminal $\vj$ always exists as $\vi$ is a terminal.
    We define two intervals $X = [\vl, \vv{j - 1}]$ and $Y = [\vj, \vi]$.
    Then, we know $l \leq j \leq i$ and $X$ is a non-terminal interval.
    
    Assume to the contrary that $t \in [\vv{l + \ell + 1}, \vv{r - \ell - 1}]$, i.e., $i \geq l + \ell + 1$.
       
    On the one hand, since $\order$ is regular, there are at least $\ceil{(j - 1 - l)/2}$ out-neighbors of $\vl$ in $X$.
    Thus, we have
    \begin{align*}
        |N^{+}_{X}(\vl) \cap N^{-}_{X}(\vi)| 
        \geq{}& |N^{+}_{X}(\vl)| - |N^{+}_{X}(\vi)| \\
        \geq{}& \ceil{\frac{j - 1 - l}{2}} - |N^{+}_{X}(\vi)| \\
        \geq{}& \frac{j - l - 1}{2} - |N^{+}_{X}(\vi)|.
    \end{align*}
    On the other hand, by the assumption that $i \geq l + \ell + 1$, we have
    \begin{align*}
        |N^{+}_{Y}(\vl) \cap N^{-}_{Y}(\vi)| 
        \geq{}& \max \Set{(|Y| - 1) -  |N^{-}_{Y}(\vl)| - |N^{+}_{Y}(\vi)|, 0} \\
        ={}& \max \Set{i - j - |N^{-}_{Y}(\vl)| - |N^{+}_{Y}(\vi)|, 0} \\
        ={}& \max \Set{\ell - (j - l - 1) - |N^{-}_{Y}(\vl)| - |N^{+}_{Y}(\vi)|, 0}.
    \end{align*}
    Furthermore, for any vertex $u \in N^{+}_{X}(\vi)$ or $u \in N^{+}_{Y}(\vi)$, the arc $tu$ is an affected backward arc above the terminal $t$; for any vertex $u \in N^{-}_{Y}(\vl)$, the arc $u\vl$ is an affected backward arc above the terminal $\vj$.
    Because the number of affected arcs is upper bounded by $\appr k$, we have
    \begin{equation*}
        |N^{+}_{X}(\vi)| + |N^{+}_{Y}(\vi)| + |N^{-}_{Y}(\vl)| \leq \appr k,
    \end{equation*}
    Combining the above relations, we get
    \begin{align*}
        |N^{+}_{J}(\vl) \cap N^{-}_{J}(\vi)|
        ={}& |N^{+}_{X}(\vl) \cap N^{-}_{X}(\vi)| + |N^{+}_{Y}(\vl) \cap N^{-}_{Y}(\vi)| \\
        \geq{}& \max\Set{\ell - (j - l - 1) - \appr k, \frac{j - l - 1}{2} - \appr k} \\
        \geq{}& \frac{\ell}{2} - \appr k \\
        ={}& k + 1.
    \end{align*}

    Therefore, there are at least $k + 1$ pairwise arc-disjoint paths from $\vl$ to $t$ formed by forward arcs.
    Similarly, we can also derive that there are at least $k + 1$ pairwise arc-disjoint paths from $t$ to $\vr$ formed by forward arcs.    
    Based on these two facts, there are $k + 1$ pairwise arc-disjoint paths from $\vl$ to $\vr$ containing $t$ formed by forward arcs.
    This indicates that Reduction Rule~\ref{REDUCTION: K CYCLES} can be applied, contradicting the assumption that $\ins$ is reduced.
    \qed
\end{proof}

\begin{lemma} \label{LEM: NONTERMINA INTERVAL LENGTH}
    Let $\instance$ be a reduced instance and $\order$ be a regular order of vertices.
    If the length of each non-terminal interval is upper bounded by $\maxlen = \maxlen(\appr, k)$ and $\ins$ contains more than
    \begin{equation*}
        (2 \appr k + 1) \maxlen + 4(\appr + 1)\appr k^{2} + 4 \appr k
    \end{equation*}
    vertices, then $\ins$ is a NO-instance.
\end{lemma}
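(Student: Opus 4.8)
The plan is to prove the contrapositive: assuming $\ins$ is a YES-instance, I will show that $\ins$ has at most $(2\appr k + 1)\maxlen + 4(\appr + 1)\appr k^{2} + 4\appr k$ vertices. If $\ins$ is a YES-instance then, by Proposition~\ref{PROP: RANK}, the optimum cost is at most $k$, so the regular order $\order$ maintained by the kernelization (the regularization of an $\appr$-approximate order) satisfies $\cost{\order} \leq \appr \cdot \mathrm{OPT} \leq \appr k$. In particular, $\order$ has at most $\appr k$ affected arcs, and therefore at most $2\appr k$ affected vertices (the endpoints of affected arcs). All three standing assumptions used below -- reducedness, regularity, and $\cost{\order} \leq \appr k$ -- then hold simultaneously for this $\order$.

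The core is a covering argument that confines \emph{all} terminals to a small number of short windows. Since $\ins$ is reduced, Reduction Rule~\ref{REDUCTION: NONCYCLE} cannot be applied, so every terminal $t$ lies on some $T$-cycle and hence has at least one affected arc $e$ above it. Writing $\Span{e} = [\vl, \vr]$ and $\ell = 2(\appr + 1)k + 2$, Lemma~\ref{LEM: TERMINAL LOCATION} forces $t$ into the window $[\vl, \vv{l + \ell}]$ or the window $[\vv{r - \ell}, \vr]$. I would therefore attach to each affected arc two windows of about $\ell$ positions, each hugging one end of the arc's span from the inside. As there are at most $\appr k$ affected arcs, this yields at most $2\appr k$ windows covering at most $2\appr k \cdot \ell$ positions; and by the observation above every terminal lies in at least one of them. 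Substituting $\ell = 2(\appr + 1)k + 2$ rewrites $2\appr k \cdot \ell$ as the $4(\appr + 1)\appr k^{2} + 4\appr k$ part of the bound.

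It remains to control the vertices that lie outside every window. By construction such a vertex is not a terminal, so the complement of the union of windows consists solely of non-terminal vertices. Viewing the (at most $2\appr k$) windows as intervals along $(\vv{1}, \ldots, \vv{n})$, their complement is a union of at most $2\appr k + 1$ intervals, each of which is a non-terminal interval and hence, by the hypothesis of the lemma, has length at most $\maxlen$. Thus the uncovered vertices number at most $(2\appr k + 1)\maxlen$. Adding the two estimates gives $n \leq (2\appr k + 1)\maxlen + 4(\appr + 1)\appr k^{2} + 4\appr k$, contradicting the assumption that $\ins$ has strictly more vertices than this; hence $\ins$ is a NO-instance.

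The main obstacle is the terminal-confinement step, namely guaranteeing that \emph{every} terminal, not merely the affected ones, lands in one of the $O(\appr k)$ windows: this is exactly where reducedness (each terminal sits on a $T$-cycle, so it has an affected arc above it) must be combined with the positional bound of Lemma~\ref{LEM: TERMINAL LOCATION} (the terminal must then hug one end of that arc's span). Without both ingredients a terminal could drift into the interior of a long span and escape the windows. The remaining work is bookkeeping that I would not carry out here: the number of complement intervals must be charged to the \emph{number of affected arcs} (giving the coefficient $2\appr k + 1$ of $\maxlen$) rather than to the far larger number of terminals, and the window endpoints must be accounted for carefully so that substituting $\ell$ reproduces the stated lower-order terms.
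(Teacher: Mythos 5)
Your proof is correct and follows essentially the same route as the paper's: both combine reducedness (every terminal lies beneath some affected arc) with Lemma~\ref{LEM: TERMINAL LOCATION} to confine all terminals to $\bigO{\appr k}$ short windows hugging the ends of spans of affected arcs, and then bound everything else as a union of at most $\bigO{\appr k}$ non-terminal intervals of length at most $\maxlen$. The only cosmetic difference is where the cut is drawn -- you cover just the $2\appr k$ end-windows of length about $\ell$ and push the span interiors into the non-terminal complement (giving the coefficient $2\appr k+1$ of $\maxlen$ directly), whereas the paper covers the $\appr k$ full spans and bounds each interior by $\maxlen$ separately -- and both versions land on the same bound, including the same harmless off-by-$2\appr k$ slip in the lower-order term, since each window in fact contains $\ell+1$ rather than $\ell$ vertices.
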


\begin{proof}
    We consider an affected backward arc $e$ with its span $\Span{e} = [\vl, \vr]$.
    Let $\ell = 2(\appr + 1)k + 2$, and we divide $\Span{e}$ into three intervals $I_{1} = [\vl, \vl + \ell]$, $I_{2} = [\vl + \ell + 1, \vr - \ell - 1]$, and $I_{3} = [\vr - \ell, \vr]$.
    Based on the Lemma~\ref{LEM: TERMINAL LOCATION}, we know that $I_{2}$ is a non-terminal interval.
    It follows that
    \begin{align*}
        |\Span{e}| 
        ={}& |I_{1}| + |I_{2}| + |I_{3}| \\
        \leq{}& (\ell + 1) + \maxlen + (\ell + 1) \\
        ={}& \maxlen + 4(\appr + 1)k + 4.
    \end{align*}
    
    Next, we define $V_{1} \subseteq V$ as the set of vertices contained in the span of some affected arc, i.e.,
    \begin{equation*}
        V_{1} = \bigcup_{e\colon \Span{e} \cap T \neq \varnothing} \Span{e},
    \end{equation*}
    and $V_{2} = V \setminus V_{1}$.
    On the one hand, according to the construction of the order $\order$, there are at most $\appr k$ affected arcs.
    Thus, we have the upper bound of the size of $V_{1}$:
    \begin{equation*}
        |V_{1}| 
        \leq \appr k \cdot (\maxlen + 4(\appr + 1)k + 4).
    \end{equation*}
    On the other hand, $V_{2}$ is divided into at most $\appr k + 1$ intervals by the spans of all affected arcs.
    Additionally, since Reduction Rule~\ref{REDUCTION: NONCYCLE} cannot be applied, every terminal must be contained in a $T$-cycle, which indicates that no terminal belongs to $V_{2}$.
    Thus, $V_{2}$ is the union of at most $\appr k + 1$ non-terminal intervals, and we have
    \begin{equation*}
        |V_{2}| \leq (\appr k + 1) \maxlen.
    \end{equation*}

    Consequently, we finally obtain that
    \begin{align*}
        |V| 
        ={}& |V_{1}| + |V_{2}| \\
        \leq{}& \appr k \cdot (\maxlen + 4(\appr + 1)k + 4) + (\appr k + 1) \maxlen \\
        ={}& (2 \appr k + 1) \maxlen + 4(\appr + 1)\appr k^{2} + 4\appr k.
    \end{align*}
    This completes our proof.
    \qed
\end{proof}

According to Lemma~\ref{LEM: NONTERMINA INTERVAL LENGTH}, to obtain a polynomial kernel for \SFAST{}, it is sufficient to bound the length of the maximal non-terminal interval.
Recall that a non-terminal interval is maximal if it is not the proper sub-interval of other non-terminal intervals.
In subsection~\ref{SUBSEC: QUADRATIC KERNEL}, we present a kernel such that the length of the maximal non-terminal interval $\maxlen = \bigO{(\appr k)^{2}}$ w.r.t.\ a given regular order $\order$, which gives a kernel of $\bigO{(\appr k)^{2}}$ vertices.

\subsection{An Almost Quadratic Vertex Kernel} \label{SUBSEC: QUADRATIC KERNEL}

In this subsection, we introduce the notation of the \emph{rich vertices} with respect to a given order $\order$ with $\cost{\order} = \appr k$.
A vertex is called rich if it is a non-terminal with at least $d$ in-neighbors and $d$ out-neighbors in the same maximal non-terminal interval $I$, where $d = (\appr + 2)k + 1$.
By capturing the characterizations of the optimal orders, we show that when the length of a maximal non-terminal interval $I$ is long enough, we can safely delete a rich vertex from the interval $I$.

We begin by providing the formal definition of rich/in-rich/out-rich vertices.
Next, with the help of the regularization of orders, we prove that the number of vertices that are not rich in a maximal non-terminal interval is upper bounded by $4d$.
Finally, we introduce a key reduction rule to obtain an almost quadratic kernel for \SFAST{}.
The kernelization algorithm described in this part follows from Subsection~\ref{SUBSEC: BASIC REDUCTION}.
Therefore, we assume throughout that the input instance has already been reduced and that the vertex order is regular.

\begin{definition}[rich vertices]
    Let $\instance$ be the instance of \SFAST{} and $\order$ be a regular order of the vertices.
    Define $d = (\appr + 2)k + 1$.
    Let $v$ be a non-terminal and $I = [\vl, \vr]$ be the maximal non-terminal interval containing $v$.
    \begin{itemize}
        \item Vertex $v$ is called \emph{in-rich} \textup{(}w.r.t.\ $\order$\textup{)} if it has at most $d - 1$ out-neighbors in $I$, i.e., $|N^{+}_{I}(v)| \leq d - 1$;
        \item vertex $v$ is called \emph{out-rich} \textup{(}w.r.t.\ $\order$\textup{)} if it has at most $d - 1$ in-neighbors in $I$, i.e., $|N^{-}_{I}(v)| \leq d - 1$;
        \item vertex $v$ is called \emph{rich} \textup{(}w.r.t.\ $\order$\textup{)} if it has at least $d$ out-neighbors and at least $d$ in-neighbors in $I$, i.e.,$|N^{+}_{I}(v)| \geq d$ and $|N^{-}_{I}(v)| \geq d$.
    \end{itemize}
    Moreover, the set of in-rich vertices in $I$ is denoted by $\richin$; the set of out-rich vertices in $I$ is denoted by $\richout$; and the set of rich vertices in $I$ is denoted by $\rich$.
\end{definition}

Notice that by definition, the rich, in-rich, and out-rich vertices in the maximal non-terminal interval $I$ form a partition of $I$.

\begin{lemma} \label{LEM: RICH}
    Let $\instance$ be the instance of \SFAST{} and $\order$ be an order of the vertices.
    Define $d = (\appr + 2)k + 1$.
    The sizes of rich, in-rich, and out-rich vertices in a maximal non-terminal interval $I$ satisfy the following \textup{(}cf. Fig.~\ref{FIG: OPT}\textup{)}.
    \begin{itemize}
        \item at least $|I| - 4d$ vertices in $I$ are rich, i.e., $|\rich| \geq |I| - 4d$;
        \item at most $2d$ vertices in $I$ are in-rich, i.e., $|\richin| \leq 2d$;
        \item at most $2d$ vertices in $I$ are out-rich, i.e., $|\richout| \leq 2d$.
    \end{itemize}
\end{lemma}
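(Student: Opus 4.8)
The plan is to reduce everything to two symmetric counting bounds inside the tournament $D[I]$ and then combine them. First I would record the \emph{coverage} half of the partition remark: every non-terminal $v \in I$ satisfies $|N^{+}_{I}(v)| + |N^{-}_{I}(v)| = |I| - 1$, so if $v$ is not rich then either $|N^{+}_{I}(v)| \leq d - 1$ or $|N^{-}_{I}(v)| \leq d - 1$, i.e.\ $v$ is in-rich or out-rich. Hence the non-rich vertices of $I$ lie in $\richin \cup \richout$, and a union bound gives $|\rich| = |I| - |I \setminus \rich| \geq |I| - |\richin| - |\richout|$. Consequently it suffices to prove $|\richout| \leq 2d$ and $|\richin| \leq 2d$ separately.

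For the out-rich bound I would argue purely by double counting the arcs of $D[I]$. By definition an out-rich vertex $v$ has $|N^{-}_{I}(v)| \leq d - 1$, equivalently an out-degree inside $I$ of at least $(|I| - 1) - (d - 1) = |I| - d$. Writing $t = |\richout|$ and summing this lower bound over the out-rich set yields at least $t(|I| - d)$ out-arcs leaving out-rich vertices. On the other hand, each such arc either stays inside $\richout$ -- and a tournament on $t$ vertices has exactly $\binom{t}{2}$ arcs -- or leaves $\richout$, of which there are at most $t(|I| - t)$. Comparing the two counts gives
\[
t(|I| - d) \leq \binom{t}{2} + t(|I| - t),
\]
and after dividing by $t$ and simplifying this collapses to $t \leq 2d - 1$, so in particular $|\richout| \leq 2d$.

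The in-rich bound is the mirror image: an in-rich vertex has in-degree at least $|I| - d$ in $D[I]$, so applying the identical double-counting argument to in-degrees (equivalently, to the reverse tournament) gives $|\richin| \leq 2d - 1 \leq 2d$. Substituting both estimates into $|\rich| \geq |I| - |\richin| - |\richout|$ then produces $|\rich| \geq |I| - 4d$, which is the remaining claim (and is vacuous but still correct when $|I| \leq 4d$).

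I do not expect a serious obstacle: the only real content is setting up the degree inequality, and the step worth double-checking is that the bound survives the possible overlap of $\richin$ and $\richout$ when $|I|$ is small -- which is why I would route the conclusion through the union bound $|I \setminus \rich| \leq |\richin| + |\richout|$ rather than through an exact partition. It is also worth noting that, although the surrounding discussion invokes regularization, this particular lemma needs no assumption on $\order$: the estimate is a property of the tournament $D[I]$ alone, consistent with the statement being phrased for an arbitrary order.
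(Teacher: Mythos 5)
Your proof is correct, and it takes a genuinely different route from the paper's. The paper regularizes $\order$ (harmless, since the classification into $\rich$, $\richin$, $\richout$ depends only on the sub-tournament $D[I]$ and not on the internal ordering of $I$) and then applies the regularity condition to the sub-intervals $[\vl,\vi]$ and $[\vi,\vr]$ to get $|N^{-}_{I}(\vi)| \geq \ceil{(i-l)/2}$ and $|N^{+}_{I}(\vi)| \geq \ceil{(r-i)/2}$; this \emph{localizes} the out-rich vertices to the $2d$ leftmost positions of $I$ and the in-rich vertices to the $2d$ rightmost positions, and shows every vertex in the middle block is rich. Your argument instead double-counts out-degrees over $\richout$ inside $D[I]$: the inequality $t(|I|-d) \leq \binom{t}{2} + t(|I|-t)$ is set up correctly and does collapse to $t \leq 2d-1$, which is in fact marginally sharper than the paper's $2d$ and is tight (a regular tournament on $2d-1$ vertices has every vertex both in-rich and out-rich). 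What your route buys is independence from the regularity hypothesis and robustness when $|I|$ is small, where $\richin$ and $\richout$ overlap and the paper's off-hand remark that the three classes ``form a partition'' is not literally true; your union bound $|I \setminus \rich| \leq |\richin| + |\richout|$ handles this cleanly. What the paper's route buys is the positional information $\richout \subseteq [\vl, \vv{l+2d}]$ and $\richin \subseteq [\vv{r-2d}, \vr]$ pictured in Fig.~\ref{FIG: OPT}, though only the cardinality bounds appear to be used in the subsequent lemmas, so your proof would serve the paper's purposes equally well.
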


\begin{proof}
    Suppose $I = [\vl, \vr]$.
    Without loss of generalization, we assume that $\order$ is regular.
    
    Consider the vertex $\vi \in I$, where $l \leq i \leq r$.
    On the one hand, in the interval $[\vl, \vi]$, the number of in-neighbors of the vertex $\vi$ is at least $\ceil{(i - l)/2}$.
    On the other hand, the number of in-neighbors of the vertex $\vi$ is at least $\ceil{(r - i)/2}$.
    As a result, we derive that
    \begin{equation*}
        |N^{-}_{I}(\vi)| \geq \ceil{\frac{i - l}{2}},~\quad\text{and}\quad~|N^{+}_{I}(v)| \geq \ceil{\frac{r - i}{2}}.
    \end{equation*}

    Thus, when $l \leq i \leq r - 2d$, we have
    \begin{equation*}
        |N^{+}_{I}(\vi)| 
        \geq \ceil{\frac{r - i}{2}} 
        \geq \ceil{\frac{r - (r - 2d)}{2}}
        = d,
    \end{equation*}
    and when $l + 2d \leq i \leq r$, we have
    \begin{equation*}
        |N^{-}_{I}(\vi)| 
        \geq \ceil{\frac{i - l}{2}} 
        \geq \ceil{\frac{(l + 2d) - l}{2}}
        = d.
    \end{equation*}
    
     One the one hand, each vertex $\vi$ with $l + 2d \leq i \leq r - 2d$ is rich, and thus we have
    \begin{equation*}
        |\rich| \geq (r - 2d) - (l + 2d) + 1 = (r - l + 1) - 4d = |I| - 4d.
    \end{equation*}
    On the other hand, all vertices $\vi$ with $l \leq i \leq r - 2d$ are not out-rich, and all vertices $\vi$ with $l + 2d \leq i \leq r$ are not in-rich.
    It follows that
    \begin{equation*}
        |\richin| \leq r - (r - 2d + 1) + 1 = 2d,
    \end{equation*}
    and
    \begin{equation*}
        |\richout| \leq (l + 2d - 1) - l + 1 = 2d.
    \end{equation*}
    This completes our proof.    
    \qed
\end{proof}

Now, we analyze the structure properties of the optimal orders when a maximal interval has a length of at least $(4\appr + 9)k + 1$.
For ease of presentation, we adopt the following notation throughout the rest of this subsection. 
Given an instance $\instance$ and a regular order $\order$ of vertices, let $I = [\vl, \vr]$ be a maximal non-terminal interval.
We call the tripe $(\IL, I, \IR)$ as an $I$-partition, where 
\begin{align*}
    \IL = [\vv{1}, \vv{l - 1}],
    \quad&&\quad 
    \IR = [\vv{r + 1}, \vv{n}].
\end{align*}
Notice that $\IL$ or $\IR$ may be empty.
Furthermore, since $I$ is maximal, any backward arcs whose endpoints belong to different intervals of the $I$-partition must be affected.
This observation plays a crucial role in our subsequent analysis.

\begin{lemma} \label{LEM: MAX NONTERMINAL INTERVAL}
    For each unaffected vertex $v \in I$, all vertices in $\IL$ are the in-neighbors of $v$ and all vertices in $\IR$ are the out-neighbor of $v$, i.e., $\IL \subseteq N^{-}(v)$ and $\IR \subseteq N^{+}(v)$.
\end{lemma}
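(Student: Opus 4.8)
The plan is to argue by contradiction, exploiting the observation recorded just before the statement: because $I$ is maximal, every backward arc with endpoints in two distinct intervals of the $I$-partition is affected. I would first make this observation completely explicit by pinning down the terminals that witness it. Since $I = [\vl, \vr]$ is a \emph{maximal} non-terminal interval, whenever $l > 1$ the vertex $\vv{l-1}$ must be a terminal (otherwise $[\vv{l-1}, \vr]$ would be a strictly larger non-terminal interval, contradicting maximality), and symmetrically $\vv{r+1}$ is a terminal whenever $r < n$. These two facts are the only place where maximality of $I$ is used.

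Next I would handle the two halves of the claim separately and symmetrically. Fix an unaffected vertex $v = \vv{q} \in I$ with $l \le q \le r$, and take an arbitrary $u = \vv{p} \in \IL$, so that $p \le l - 1 < q$. The single arc between $u$ and $v$ is therefore either the forward arc $uv$ or the backward arc $vu$. I claim the latter is impossible: the span of the backward arc $vu$ is $[\vv{p}, \vv{q}]$, and since $p \le l-1 \le q$ this span contains the index $l-1$ and hence the terminal $\vv{l-1}$. Thus $vu$ would be an affected backward arc above $\vv{l-1}$, making $v$ an endpoint of an affected arc and contradicting that $v$ is unaffected. Consequently the arc must be $uv$, i.e.\ $u \in N^{-}(v)$; as $u \in \IL$ was arbitrary, $\IL \subseteq N^{-}(v)$.

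The argument for $\IR$ is the mirror image: for $w = \vv{p'} \in \IR$ we have $p' \ge r + 1 > q$, so the arc between $v$ and $w$ is either the forward arc $vw$ or the backward arc $wv$. The backward arc $wv$ would have span $[\vv{q}, \vv{p'}]$, which contains the index $r+1$ and hence the terminal $\vv{r+1}$, again making it an affected arc incident to $v$ and contradicting the hypothesis. Hence $vw \in A(D)$ and $w \in N^{+}(v)$, giving $\IR \subseteq N^{+}(v)$.

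I do not expect any real obstacle here; the content is essentially a direct unpacking of the boundary observation. The only points requiring care are the justification that $\vv{l-1}$ and $\vv{r+1}$ are terminals (where maximality enters), and the degenerate cases where $\IL$ or $\IR$ is empty, in which the corresponding inclusion holds vacuously.
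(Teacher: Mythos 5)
Your proposal is correct and follows essentially the same route as the paper's proof: both identify $\vv{l-1}$ (resp.\ $\vv{r+1}$) as a terminal via maximality of $I$, and then observe that a backward arc from $v$ into $\IL$ (resp.\ from $\IR$ into $v$) would be an affected arc above that terminal, contradicting that $v$ is unaffected. Your explicit treatment of the empty-interval degenerate cases is a minor addition the paper leaves implicit.
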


\begin{proof}
    By definition, we have $\IL = [\vv{1}, \vv{l - 1}]$, and $\IR = [\vv{r + 1}, \vv{n}]$.
    If $\IL$ is non-empty, vertex $\vv{l - 1}$ is a terminal as $I$ is a maximal non-terminal interval.
    For any vertex $u \in L$, if there is an arc from $v$ to $u$, the arc $vu$ is an affected backward arc above the terminal $v_{l - 1}$, contradicting that $v$ is unaffected.
    As a result, $uv$ is an arc in the tournament $D$, which leads to $\IL \subseteq N^{-}(v)$.
    Similarly, we can also obtain that $\IR \subseteq N^{+}(v)$.
    \qed
\end{proof}

\begin{figure}[t!]
    \centering
        \begin{tikzpicture}
            [
            scale = 0.75,
            nonterminal/.style={draw, shape = circle, fill = white, inner sep = 2pt},
            terminal/.style={draw, shape = circle, fill = black, inner sep = 2pt},
            ]
            \node[nonterminal, label={[xshift = 0mm, yshift = -7mm]$\vv{1}$}] (v1) at(-7, 0) {};
            \node[label={[xshift = 0mm, yshift = -3mm]$\cdots$}] (v2) at(-6.25, 0) {};
            \node[terminal, label={[xshift = 0mm, yshift = -7.3mm]$\vv{l - 1}$}] (v3) at(-5.5, 0) {};
            \node[nonterminal, label={[xshift = 0mm, yshift = -7mm]$\vl$}] (v4) at(-4, 0) {};
            \node[label={[xshift = 0mm, yshift = -3mm]$\cdots$}] (v5) at(-3.25, 0) {};
            \node[nonterminal, label={[xshift = 0mm, yshift = -7mm]$\vv{l + 2d}$}] (v6) at(-2.5, 0) {};
            \node[label={[xshift = 0mm, yshift = -3mm]$\cdots$}] (v7) at(-1.675, 0) {};
            \node[nonterminal, label={[xshift = 0mm, yshift = -7mm]$\vv{l + 2d + 1}$}] (v8) at(-0.85, 0) {};
            \node[label={[xshift = 0mm, yshift = -3mm]$\cdots$}] (v9) at(0, 0) {};
            \node[nonterminal, label={[xshift = 0mm, yshift = -7mm]$\vv{r - 2d - 1}$}] (v10) at(0.85, 0) {};
            \node[label={[xshift = 0mm, yshift = -3mm]$\cdots$}] (v11) at(1.675, 0) {};
            \node[nonterminal, label={[xshift = 0mm, yshift = -7mm]$\vv{r - 2d}$}] (v12) at(2.5, 0) {};
            \node[label={[xshift = 0mm, yshift = -3mm]$\cdots$}] (v13) at(3.25, 0) {};
            \node[nonterminal, label={[xshift = 0mm, yshift = -7mm]$\vr$}] (v14) at(4, 0) {};
            \node[terminal, label={[xshift = 0mm, yshift = -7.3mm]$\vv{r + 1}$}] (v15) at(5.5, 0) {};
            \node[label={[xshift = 0mm, yshift = -3mm]$\cdots$}] (v16) at(6.25, 0) {};
            \node[nonterminal, label={[xshift = 0mm, yshift = -7mm]$\vv{n}$}] (v17) at(7, 0) {};

            \node[label={[xshift = 0mm, yshift = -3mm]$\IL = [\vv{1}, \vv{l - 1}]$}] (x1) at(-6, 1) {};
            \node[label={[xshift = 0mm, yshift = -3mm]$I = [\vl, \vr]$}] (x2) at(0, 1) {};
            \node[label={[xshift = 0mm, yshift = -3mm]$\IR = [\vv{r + 1}, \vv{n}]$}] (x3) at(6, 1) {};

            \node[label={[xshift = 0mm, yshift = -3mm]$\richout \subseteq [\vl, \vv{l + 2d}]$}] (x4) at(-3.2, -1.5) {};
            \node[label={[xshift = 0mm, yshift = -3mm]$\rich \subseteq I$}] (x5) at(0, -1.5) {};
            \node[label={[xshift = 0mm, yshift = -3mm]$\richin \subseteq [\vv{r - 2d}, \vr]$}] (x6) at(3.2, -1.5) {};
            
            \draw[rounded corners, dotted]
            (-7.4, -1) rectangle (-4.8, 0.7)
            (4.8, -1) rectangle (7.4, 0.7)
            (-4.6, -1) rectangle (4.6, 0.7);

            \draw[rounded corners, dashed, blue]
            (-4.5, -0.8) rectangle (-1.7, 0.5);
            \draw[rounded corners, dashed, purple]
            (1.7, -0.8) rectangle (4.5, 0.5);

            \node[nonterminal, label={[xshift = 0mm, yshift = -7.5mm]$\optorder_{1}$}] (u1) at(-7, -4) {};
            \node[label={[xshift = 0mm, yshift = -3mm]$\cdots$}] (u2) at(-6, -4) {};
            \node[nonterminal, label={[xshift = 0mm, yshift = -7.5mm]$\optorder_{l' - 2}$}] (u3) at(-5, -4) {};
            \node[terminal, label={[xshift = 0mm, yshift = -7.5mm]$\optorder_{l' - 1}$}] (u4) at(-4, -4) {};
            \node[nonterminal, label={[xshift = 0mm, yshift = -7.5mm]$\optorder_{l'}$}] (u5) at(-3, -4) {};
            \node[nonterminal, label={[xshift = 0mm, yshift = -7.5mm]$\optorder_{l' + 1}$}] (u6) at(-2, -4) {};
            \node[label={[xshift = 0mm, yshift = -3mm]$\cdots$}] (u7) at(-1, -4) {};
            \node[nonterminal, label={[xshift = 0mm, yshift = -7.5mm]$\optorder_{i}$}] (u8) at(0, -4) {};
            \node[label={[xshift = 0mm, yshift = -3mm]$\cdots$}] (u9) at(1, -4) {};
            \node[nonterminal, label={[xshift = 0mm, yshift = -7.5mm]$\optorder_{r' - 1}$}] (u10) at(2, -4) {};
            \node[nonterminal, label={[xshift = 0mm, yshift = -7.5mm]$\optorder_{r'}$}] (u11) at(3, -4) {};
            \node[terminal, label={[xshift = 0mm, yshift = -7.5mm]$\optorder_{r' + 1}$}] (u12) at(4, -4) {};
            \node[terminal, label={[xshift = 0mm, yshift = -7.5mm]$\optorder_{r' + 2}$}] (u13) at(5, -4) {};
            \node[label={[xshift = 0mm, yshift = -3mm]$\cdots$}] (u14) at(6, -4) {};
            \node[nonterminal, label={[xshift = 0mm, yshift = -7.5mm]$\optorder_{n}$}] (u15) at(7, -4) {};

            \draw[rounded corners, dotted]
            (-7.4, -5) rectangle (-3.55, -3.3)
            (3.45, -5) rectangle (7.4, -3.3)
            (-3.35, -5) rectangle (3.35, -3.3);

            \draw[rounded corners, dashed, blue]
            (-7.3, -4.75) rectangle (3.3, -3.4);
            \draw[rounded corners, dashed, purple]
            (-3.3, -4.85) rectangle (7.3, -3.5);

            \node[label={[xshift = 0mm, yshift = -3mm]$\optIL = [\optorder_{1}, \optorder_{l' - 1}]$}] (x1) at(-5.5, -3) {};
            \node[label={[xshift = 0mm, yshift = -3mm]$\optI = [\optorder_{l'}, \optorder_{r'}]$}] (x2) at(0, -3) {};
            \node[label={[xshift = 0mm, yshift = -3mm]$\optIR = [\optorder_{r' + 1}, \optorder_{n}]$}] (x3) at(5.5, -3) {};

            \node[label={[xshift = 0mm, yshift = -3mm]$\IL, W^{+}_{I} \subseteq \optIL \cup \optI$}] (x7) at(-4, -5.5) {};
            \node[label={[xshift = 0mm, yshift = -3mm]$W_{I} \subseteq \optI$}] (x8) at(0, -5.5) {};
            \node[label={[xshift = 0mm, yshift = -3mm]$\IR, W^{-}_{I} \subseteq \optIR \cup \optI$}] (x9) at(4, -5.5) {};
            
        \end{tikzpicture}
    \caption{
        A regular order $\order$ and an optimal order $\optorder$ of $n$ vertices.
        Black vertices denote the terminals, and white vertices denote the non-terminals.
        In the order $\order$, $I = [\vl, \vr]$ is a maximal non-terminal interval, and $(\IL, I, \IR)$ is the $I$-partition, where $\IL = [\vv{1}, \vv{l - 1}]$ and $\IR = [\vv{r + 1}, \vv{n}]$.
        In the order $\optorder$, the interval $\optI = [\optorder_{l'}, \optorder_{r'}]$ is a maximal non-terminal interval w.r.t\ $\optorder$, and $(\optIL, \optI, \optIR)$ is the $\optI$-partition, where $\IL = [\optorder_{1}, \optorder_{l' - 1}]$ and $\IR = [\optorder_{r' + 1}, \optorder_{n}]$.
        All out-rich vertices in $I$ are in the blue dotted boxes, and all in-rich vertices in $I$ are in the purple dotted boxes, where $d = (\appr + 2)k + 1$.}
        \label{FIG: OPT}
\end{figure}
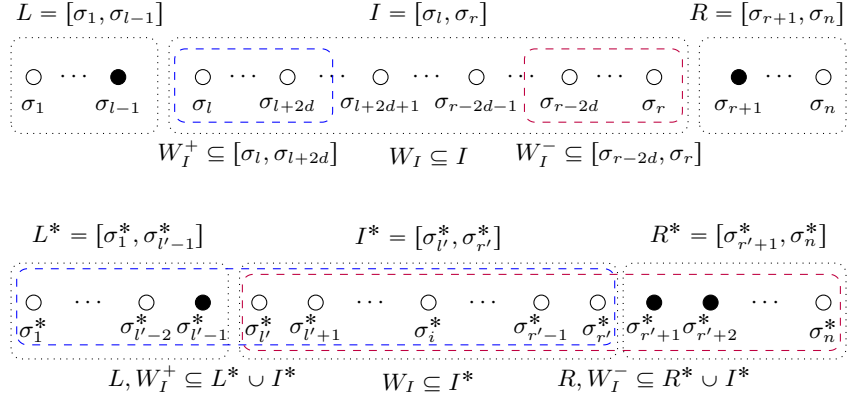

\begin{lemma} \label{LEM: TERMINAL}
    Let $d = (\appr + 2)k + 1$.
    If the input instance is a YES-instance and the interval $I$ has length at least $2d + 1$, in an optimal order $\optorder$ of vertices in $D$, we have
    \begin{itemize}
        \item rich or in-rich vertices in $I$ must lie on the left of any terminal in $\IL$; and
        \item rich or out-rich vertices in $I$ must lie on the right of any terminal in $\IR$.
    \end{itemize}
\end{lemma}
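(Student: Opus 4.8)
The plan is to prove the first bullet and obtain the second from the order-reversal symmetry that reverses every arc of $D$ and reverses both $\order$ and $\optorder$. This involution exchanges $N^{-}_{I}(\cdot)$ with $N^{+}_{I}(\cdot)$, hence swaps $\richin$ with $\richout$ and fixes $\rich$; it swaps $\IL$ with $\IR$, turns ``left'' into ``right'', and carries regular orders to regular orders while preserving $\cost{\cdot}$ and therefore optimality. Consequently it suffices to treat the first bullet. So fix a rich or in-rich vertex $v \in I$ and a terminal $t \in \IL$, and suppose toward a contradiction that $v$ lies to the right of $t$ in $\optorder$. Since $v$ is rich or in-rich, it has at least $d = (\appr + 2)k + 1$ in-neighbors inside $I$, i.e.\ $|N^{-}_{I}(v)| \geq d$.

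First I would extract a large, well-structured subfamily of these in-neighbors. For each $u \in N^{-}_{I}(v)$ with $u \to t$ in $D$, the arc $ut$ is backward with respect to $\order$ (as $u \in I$ sits to the right of $t \in \IL$), and its span contains the terminal $t$; hence $ut$ is an affected arc of $\order$, and distinct $u$ give distinct such arcs. Because $\cost{\order} \leq \appr k$, at most $\appr k$ in-neighbors satisfy $u \to t$. Thus the set $N' = \Set{ u \in N^{-}_{I}(v) : t \to u \text{ in } D }$ has size $|N'| \geq d - \appr k = 2k + 1$, and every $u \in N'$ realizes a forward pattern $t \to u \to v$ in $D$.

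The target is now to charge to the vertices of $N'$ more than $k$ distinct backward arcs above $t$ in $\optorder$, contradicting $\cost{\optorder} \leq k$ for the YES-instance. The easy case is when $u \in N'$ lies to the left of $t$ in $\optorder$: then $t \to u$ is a backward arc whose span contains $t$, so it is affected and above $t$, and different $u$ produce different such arcs. If at least $k + 1$ of the $N'$-vertices lie to the left of $t$, the contradiction is immediate.

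The main obstacle is the complementary case. A vertex $u \in N'$ lying to the right of $t$ in $\optorder$ — whether strictly between $t$ and $v$ or beyond $v$ — produces no backward arc above $t$ on its own: either both $t \to u$ and $u \to v$ are forward, or only $u \to v$ is backward with a span that excludes $t$. To rule this configuration out in an \emph{optimal} order I would pursue one of two routes and regard making either one uniform over all placements of the $N'$-vertices as the crux. Route~(i) is a block-exchange argument that relocates $t$ (or the cluster of right-of-$t$ vertices of $N' \cup \Set{v}$) across the offending region, using the maximality of $I$ — so that every arc crossing the $I$-partition is affected — to exhibit a strict decrease in the number of affected arcs, contradicting optimality. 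Route~(ii) is a Menger/max-flow certificate built inside the forward-arc DAG of $\optorder$: from the $2k+1$ patterns $t \to u \to v$ I would try to assemble $k + 1$ arc-disjoint forward paths through $t$, mirroring the flow argument of Lemma~\ref{LEM: TERMINAL LOCATION} and contradicting the fact that a reduced instance admits no such family. Once one of these routes delivers $k + 1$ distinct affected arcs (or a strict cost drop) regardless of where the right-of-$t$ vertices of $N'$ fall, the contradiction follows, establishing the first bullet and, by the symmetry above, the full statement.
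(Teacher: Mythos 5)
There is a genuine gap, and it is not the technical crux you deferred but the orientation of the statement itself. You took the bullets literally and assumed for contradiction that a rich or in-rich $v \in I$ lies to the right of a terminal $t \in \IL$ in $\optorder$; but that configuration is exactly what is forced. The bullets as printed have left and right inverted (compare how the lemma is invoked in the proof of Lemma~\ref{LEM: OPT}: ``any terminal $t \in \IL$ lies on the \emph{left} of the rich vertex $v$ according to $\optorder$''), and the literal reading is false outright: for a YES-instance in which the given regular order $\order$ itself is optimal, the rich vertices of $I$ sit to the right of every terminal of $\IL$ in that optimal order. Consequently the case you flag as ``the main obstacle'' --- all of $N'$ and $v$ to the right of $t$, producing no backward arc above $t$ --- is not an obstacle to be engineered away; it is the cost-free configuration that optimal orders actually adopt, so there is no contradiction to extract. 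Neither of your routes can close it: a block exchange cannot strictly decrease the cost of an order in which this pattern contributes zero affected arcs, and the Menger/flow certificate in the style of Lemma~\ref{LEM: TERMINAL LOCATION} and Reduction Rule~\ref{REDUCTION: K CYCLES} requires a backward affected arc above $t$ to reverse, which your configuration does not supply (the patterns $t \to u \to v$ are paths out of $t$, not cycles through it).

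The fix is to run your own counting in the opposite direction, which is precisely the paper's proof. Your ``easy case'' already shows that at most $k$ vertices of $N' = \Set{u \in N^{-}_{I}(v) : t \to u \text{ in } D}$ lie to the left of $t$ in $\optorder$ (each such $u$ gives a distinct affected arc $tu$ above $t$, and $\cost{\optorder} \leq k$), and your charging of $\order$-affected arcs gives $|N'| \geq d - \appr k = 2k + 1$; hence at least $k + 1$ vertices of $N'$ lie to the right of $t$ in $\optorder$. Now suppose $v$ were to the \emph{left} of $t$: each of these $k + 1$ in-neighbors $u$ of $v$ yields a backward arc $uv$ w.r.t.\ $\optorder$ whose span contains the terminal $t$, i.e., $k + 1$ distinct affected arcs, contradicting $\cost{\optorder} \leq k$. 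So $v$ must lie to the right of every terminal of $\IL$, and symmetrically rich or out-rich vertices lie to the left of every terminal of $\IR$; your reversal-symmetry reduction for the second bullet is sound and matches the paper's ``symmetric argument.'' Aside from the inverted orientation, your bookkeeping differs from the paper's only cosmetically: the paper bounds the $\order$-affected vertices in $N^{-}_{I}(v)$ by $\appr k$ and invokes Lemma~\ref{LEM: MAX NONTERMINAL INTERVAL} to guarantee $t \to u$ for the unaffected ones, whereas you bound the in-neighbors with $u \to t$ directly; both produce the $k + 1$ witnesses on which the (correctly oriented) contradiction rests.
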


\begin{proof}    
    Let $t$ be a terminal in $\IL$ and $v$ be a rich or in-rich vertex in $I$.
    Notice that $tu$ is an arc for any unaffected vertex $u \in I$ by Lemma~\ref{LEM: MAX NONTERMINAL INTERVAL}.
    If $u$ is on the left of $t$ according to $\optorder$, then $tu$ is an affected arc w.r.t.\ $\optorder$.
    Thus, in the optimal order $\optorder$, at most $k$ unaffected vertices in $N^{-}_{I}(v)$ lie on the left of $t$.
    Since $\cost{\order} \leq \appr k$ and each affected arc has at most one endpoint in $I$, there are at most $\appr k$ affected vertices in $N^{-}_{I}(v)$.
    Additionally, because $|I| \geq 2d + 1$, we have 
    \begin{equation*}
        |N^{-}_{I}(v)| \geq d = (\appr + 2)k + 1.
    \end{equation*}    
    It follows that at least $k + 1$ unaffected in-neighbors of $v$ in $I$ are on the right of $t$ according to $\optorder$.    
    Consequently, in the optimal order $\optorder$, $v$ lies on the right of $t$; otherwise, there would be at least $k + 1$ affected arcs above $t$, contradicting that $\optorder$ is optimal.
    
    A symmetric argument shows that for any rich or out-rich vertex $v'$ in $I$ and any terminal $t'$ in $\IR$, $v'$ must lie on the right of $t'$ according to $\optorder$.
    This completes the proof.
    \qed
\end{proof}

\begin{lemma} \label{LEM: OPT}
    Let $d = (\appr + 2)k + 1$.
    If the input instance is a YES-instance, and the interval $I$ has length of at least $2d + 1$ and contains at least $k + 1$ rich vertices, then for any optimal order $\optorder$, there exists a maximal non-terminal interval $\optI$ with $\optI$-partition $(\optIL, \optI, \optIR)$ satisfying the following \textup{(}cf. Fig.~\ref{FIG: OPT}\textup{)}.
    \begin{enumerate}[\textup{(}a\textup{)}]
        \item all terminals in $\IL$ \textup{(}resp. $\IR$\textup{)} belong to $\optIL$ \textup{(}resp. $\optIR$\textup{)}, i.e., $T \cap \IL \subseteq \optIL$ and $T \cap \IR \subseteq \optIR$; \label{ITEM: TERMINAL}
        \item all rich vertices belong to $\optI$, i.e. $\rich \subseteq \optI$; and \label{ITEM: RICH}
        \item no vertices in $\richout$ \textup{(}resp. $\richin$\textup{)} belong to $\optIR$ \textup{(}resp. $\optIL$\textup{)}, i.e., $\richout \cap \optIR = \varnothing$ and $\richin \cap \optIL = \varnothing$. \label{ITEM: RICH IN+OUT}
        \item no vertices in $\IL$ \textup{(}resp. $\IR$\textup{)} belong to $\optIR$ \textup{(}resp. $\optIL$\textup{)}, i.e., $\IL \cap \optIR = \varnothing$ and $\IR \cap \optIL = \varnothing$. \label{ITEM: NONTERMINAL}
    \end{enumerate}
\end{lemma}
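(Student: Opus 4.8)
The plan is to read off the relative positions forced by Lemma~\ref{LEM: TERMINAL} and then to \emph{locate} $\optI$ as the maximal non-terminal interval of $\optorder$ that absorbs all of $\rich$. Recall the two positional facts proved in Lemma~\ref{LEM: TERMINAL}: in $\optorder$ every rich (indeed every rich or in-rich) vertex of $I$ lies to the right of every terminal of $\IL$, and every rich (indeed every rich or out-rich) vertex of $I$ lies to the left of every terminal of $\IR$. Since $I$ is a maximal non-terminal interval, every terminal of $D$ lies in $\IL \cup \IR$; hence no terminal can sit strictly between two rich vertices in $\optorder$. Therefore all $|\rich| \ge k+1$ rich vertices occupy one common maximal non-terminal interval of $\optorder$, which I name $\optI = [\optorder_{l'}, \optorder_{r'}]$ with $\optI$-partition $(\optIL, \optI, \optIR)$; this is exactly claim (b).

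Claims (a) and (c) then follow from the same positional facts together with maximality. Because $\optI$ is maximal, whenever $\optIL$ (resp.\ $\optIR$) is nonempty its boundary vertex $\optorder_{l'-1}$ (resp.\ $\optorder_{r'+1}$) is a terminal; lying to the left of all rich vertices (resp.\ to the right of all rich vertices), this boundary terminal must belong to $\IL$ (resp.\ $\IR$), since the $\IR$-terminals sit to the right of $\rich$ and the $\IL$-terminals to the left. Every terminal of $\IL$ lies left of all rich vertices, hence left of $\optI$, hence in $\optIL$, and symmetrically for $\IR$, which is claim (a). For claim (c), an in-rich vertex lies to the right of $\optorder_{l'-1}$ and so cannot be in $\optIL$, and the out-rich case is symmetric.

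The real work is claim (d), which I expect to be the main obstacle; by the mirror symmetry $\IL \leftrightarrow \IR$, $\richin \leftrightarrow \richout$ it suffices to rule out $\IL \cap \optIR \neq \varnothing$. Claim (a) already disposes of the terminals of $\IL$, so I must show that no non-terminal $u \in \IL$ lands in $\optIR$. Write $t' = \optorder_{r'+1} \in \IR$ for the boundary terminal; any such $u$ lies strictly to the right of $t'$ and of all rich vertices in $\optorder$. The clean case is when $u$ is \emph{unaffected} w.r.t.\ $\order$: the same reasoning as in Lemma~\ref{LEM: MAX NONTERMINAL INTERVAL} shows such a $u$ beats every vertex of $I$, because an arc $vu$ with $v \in I$ would be a backward arc above the terminal $\vv{l-1}$ and would make $u$ affected. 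Hence $u \to v$ for all $\ge k+1$ rich vertices $v$, and in $\optorder$ each arc $uv$ is a backward arc with $t'$ in its span; this yields at least $k+1$ affected arcs, contradicting $\cost{\optorder} \le k$.

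The genuine difficulty is confined to the (few) vertices $u \in \IL$ that are \emph{affected} w.r.t.\ $\order$: now a rich vertex may beat $u$, so the direct count only guarantees that $u$ beats $|\rich|$ minus the number $|B_u|$ of rich vertices beating $u$, which need not exceed $k$. I would resolve this by a relocation argument against optimality: slide such a $u$ leftward to the position immediately left of $t'$. Every backward arc $uv$ to a rich vertex $v$ then has its span contained in the terminal-free block $\optI$ and thus ceases to be affected, so at least $|\rich| - |B_u| \ge (k+1) - |B_u|$ affected arcs are destroyed; the newly created affected arcs are exactly the backward arcs $wu$ whose span newly acquires the terminal $t'$, coming from vertices $w \in \optIR$ that beat $u$. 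The crux, and the delicate bookkeeping on which the whole proof turns, is to bound this number of newly affected arcs so that the net change is a strict decrease, thereby contradicting the optimality of $\optorder$; this is precisely where the hypothesis of $k+1$ rich vertices must be used to its fullest.
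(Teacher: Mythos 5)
Your handling of (a)--(c) is correct and follows essentially the same route as the paper: anchor $\optI$ as the maximal non-terminal interval of $\optorder$ containing a rich vertex, and read the containments off Lemma~\ref{LEM: TERMINAL} plus maximality. The gap is in (d). You correctly identify it as the hard part, but you do not prove it: for an affected $u \in \IL$ you propose sliding $u$ to just left of $t' = \optorder_{r'+1}$ and then concede that ``the delicate bookkeeping'' of newly created affected arcs remains open. That bookkeeping is the entire content of the claim, and as set up it does not obviously close: relocating $u$ past $t'$ turns \emph{every} arc $wu$ with $w$ an in-neighbor of $u$ lying between $u$'s new and old positions into a backward arc above $t'$, and nothing in your argument bounds the number of such $w$ (which can be as large as $|\optIR|$) by the number $|\rich| - |B_u|$ of destroyed affected arcs. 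So part (d) is genuinely unproved in your proposal.

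The case split on whether $u$ is affected is an artifact of conditioning unaffectedness on the wrong endpoint. If $u \in \IL$, $v \in I$, and $vu$ were an arc, then $vu$ would be a backward arc w.r.t.\ $\order$ above the terminal $\vv{l-1}$, so \emph{both} endpoints would be affected; this is exactly why Lemma~\ref{LEM: MAX NONTERMINAL INTERVAL} places the ``unaffected'' hypothesis on the vertex inside $I$ and none on the vertex of $\IL$. Consequently every non-terminal $u \in \IL$, affected or not, is an in-neighbor of every unaffected vertex $v$ of $I$, and if $u \in \optIR$ then each such $v$ lying in $\optI$ contributes an affected backward arc $uv$ w.r.t.\ $\optorder$ (its span contains $t'$). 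The paper finishes (d) by exactly this count over the at least $k+1$ vertices of $\optI \supseteq \rich$, contradicting $\cost{\optorder} \le k$ --- the same one-line argument you already used in your ``clean case,'' with no relocation needed. Rewrite the orientation claim with the condition on $v$ rather than on $u$ and your proof of (d) reduces to that count.
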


\begin{proof}
    Let $v \in \rich$ be any rich vertex w.r.t.\ $\order$ in $I$.
    Given the optimal order $\optorder$, denote by $\optI$ the maximal non-terminal interval (w.r.t.\ $\optorder$) containing $v$.
    We show that the $\optI$-partition $(\optIL, \optI, \optIR)$ satisfies the four properties in the lemma.
    
    We first consider (\ref{ITEM: TERMINAL}).
    According to Lemma~\ref{LEM: TERMINAL}, any terminal $t \in \IL$ lies on the left of the rich vertex $v$ according to $\optorder$.
    Since $\optI$ is a non-terminal interval, it follows that $t \in \optIL$.
    This implies the correctness of (\ref{ITEM: TERMINAL}).
    
    Then, we consider (\ref{ITEM: RICH}).
    By (\ref{ITEM: TERMINAL}), all rich vertices in $I$ must lie on the right of the terminals in $\optIL$ and on the left of the terminals in $\optIR$.
    Consequently, since $\optI$ is a maximal non-terminal interval, all rich vertices must belong to $\optI$, leading to the correctness of (\ref{ITEM: RICH}).

    Next, we consider (\ref{ITEM: RICH IN+OUT}).
    According to Lemma~\ref{LEM: TERMINAL}, in the optimal order $\optorder$, any out-rich vertex in $I$ lies on the right of each terminal in $\IL$.
    Since $\optI$ is a maximal non-terminal interval, no out-rich vertex in $I$ appears in $\optIL$, i.e., $\richin \cap \optIL = \varnothing$.
    Similarly, we can also obtain that $\richout \cap \optIR = \varnothing$.
    
    Finally, we consider (\ref{ITEM: NONTERMINAL}).
    By the correctness of (\ref{ITEM: TERMINAL}), we only need to focus on the non-terminals in $\IL$ and $\IR$.
    Let $u$ be a non-terminal in $\IL$.
    Assume for contradiction that $u \in \optIR$.
    For each vertex $v$ in $\optI$, Lemma~\ref{LEM: MAX NONTERMINAL INTERVAL} implies that $uv$ is an affected backward arc.
    Since $|\optI| \geq |\rich| \geq k + 1$, we have $\cost{\optorder} \geq k + 1$.
    This contradicts the optimality of $\optorder$.
    Similarly, we can show that the no non-terminals in $\IR$ belong to $\optIL$.
    Combining with (\ref{ITEM: TERMINAL}), we finally conclude that $\IL \cap \optIR = \varnothing$ and $\IR \cap \optIL = \varnothing$.
    \qed
\end{proof}

With the necessary groundwork established, we now proceed to introduce the key reduction rule.

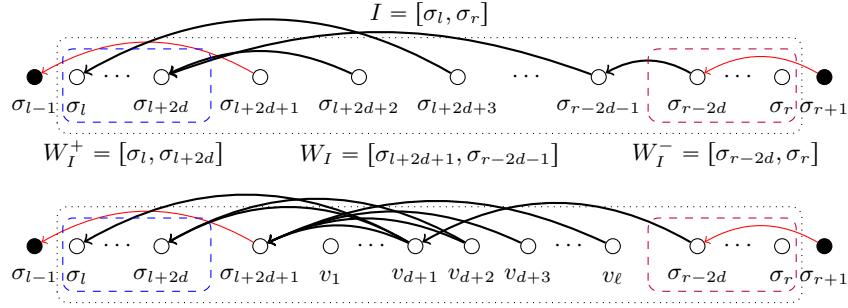
\begin{figure}[t!]
    \centering
        \begin{tikzpicture}
            [
            scale = 0.75,
            nonterminal/.style={draw, shape = circle, fill = white, inner sep = 2pt},
            terminal/.style={draw, shape = circle, fill = black, inner sep = 2pt},
            ]            
            \node[terminal, label={[xshift = 0mm, yshift = -7.3mm]$\vv{l - 1}$}] (v1) at(-7, 0) {};
            \node[nonterminal, label={[xshift = 0mm, yshift = -7.2mm]$\vl$}] (v2) at(-6.25, 0) {};
            \node[label={[xshift = 0mm, yshift = -3mm]$\cdots$}] (v3) at(-5.5, 0) {};
            \node[nonterminal, label={[xshift = 0mm, yshift = -7.2mm]$\vv{l + 2d}$}] (v4) at(-4.75, 0) {};
            \node[nonterminal, label={[xshift = 0mm, yshift = -7.3mm]$\vv{l + 2d + 1}$}] (v5) at(-3, 0) {};
            \node[nonterminal, label={[xshift = 0mm, yshift = -7.3mm]$\vv{l + 2d + 2}$}] (v6) at(-1.25, 0) {};
            \node[nonterminal, label={[xshift = 0mm, yshift = -7.3mm]$\vv{l + 2d + 3}$}] (v8) at(0.5, 0) {};
            \node[label={[xshift = 0mm, yshift = -3mm]$\cdots$}] (v9) at(1.75, 0) {};
            \node[nonterminal, label={[xshift = 0mm, yshift = -7.3mm]$\vv{r - 2d - 1}$}] (v12) at(3, 0) {};
            \node[nonterminal, label={[xshift = 0mm, yshift = -7.3mm]$\vv{r - 2d}$}] (v14) at(4.75, 0) {};
            \node[label={[xshift = 0mm, yshift = -3mm]$\cdots$}] (v15) at(5.5, 0) {};
            \node[nonterminal, label={[xshift = 0mm, yshift = -7.2mm]$\vr$}] (v16) at(6.25, 0) {};
            \node[terminal, label={[xshift = 0mm, yshift = -7.5mm]$\vv{r + 1}$}] (v17) at(7, 0) {};
            
            \node[label={[xshift = 0mm, yshift = -3mm]$I = [\vl, \vr]$}] (x2) at(0, 1) {};

            \node[label={[xshift = 0mm, yshift = -3mm]$\richout = [\vl, \vv{l + 2d}]$}] (x4) at(-5.25, -1.5) {};
            \node[label={[xshift = 0mm, yshift = -3mm]$\rich = [\vv{l + 2d + 1}, \vv{r - 2d - 1}]$}] (x5) at(0, -1.5) {};
            \node[label={[xshift = 0mm, yshift = -3mm]$\richin = [\vv{r - 2d}, \vr]$}] (x6) at(5.25, -1.5) {};

            \draw[->, red] (v5) to[out = 150, in = 30] (v1);
            \draw[->, red] (v17) to[out = 150, in = 30] (v14);
            \draw[->, thick] (v6) to[out = 155, in = 25] (v4);
            \draw[->, thick] (v12) to[out = 160, in = 20] (v4);
            \draw[->, thick] (v8) to[out = 150, in = 30] (v2);
            \draw[->, thick] (v14) to[out = 150, in = 30] (v12);
            
            \draw[rounded corners, dotted]
            (-6.6, -1) rectangle (6.6, 0.7);

             \draw[rounded corners, dashed, blue]
             (-6.5, -0.8) rectangle (-3.875, 0.5);
             \draw[rounded corners, dashed, purple]
             (3.875, -0.8) rectangle (6.5, 0.5);

            \node[terminal, label={[xshift = 0mm, yshift = -7.3mm]$\vv{l - 1}$}] (u1) at(-7, -3) {};
            \node[nonterminal, label={[xshift = 0mm, yshift = -7.2mm]$\vl$}] (u2) at(-6.25, -3) {};
            \node[label={[xshift = 0mm, yshift = -3mm]$\cdots$}] (u3) at(-5.5, -3) {};
            \node[nonterminal, label={[xshift = 0mm, yshift = -7.3mm]$\vv{l + 2d}$}] (u4) at(-4.75, -3) {};
            \node[nonterminal, label={[xshift = 0mm, yshift = -7.3mm]$\vv{l + 2d + 1}$}] (u5) at(-3, -3) {};
            \node[nonterminal, label={[xshift = 0mm, yshift = -7.2mm]$v_{1}$}] (u6) at(-1.75, -3) {};
            \node[label={[xshift = 0mm, yshift = -3mm]$\cdots$}] (u7) at(-1, -3) {};
            \node[nonterminal, label={[xshift = 0mm, yshift = -7.2mm]$v_{d + 1}$}] (u8) at(-0.25, -3) {};
            \node[nonterminal, label={[xshift = 0mm, yshift = -7.2mm]$v_{d + 2}$}] (u9) at(0.75, -3) {};
            \node[nonterminal, label={[xshift = 0mm, yshift = -7.2mm]$v_{d + 3}$}] (u10) at(1.75, -3) {};
            \node[label={[xshift = 0mm, yshift = -3mm]$\cdots$}] (u11) at(2.5, -3) {};
            \node[nonterminal, label={[xshift = 0mm, yshift = -7.3mm]$v_{\ell}$}] (u12) at(3.25, -3) {};
            \node[nonterminal, label={[xshift = 0mm, yshift = -7.3mm]$\vv{r - 2d}$}] (u14) at(4.75, -3) {};
            \node[label={[xshift = 0mm, yshift = -3mm]$\cdots$}] (v15) at(5.5, -3) {};
            \node[nonterminal, label={[xshift = 0mm, yshift = -7.2mm]$\vr$}] (u16) at(6.25, -3) {};
            \node[terminal, label={[xshift = 0mm, yshift = -7.5mm]$\vv{r + 1}$}] (u17) at(7, -3) {};

            \draw[->, red] (u5) to[out = 150, in = 30] (u1);
            \draw[->, red] (u17) to[out = 150, in = 30] (u14);
            \draw[->, thick] (u8) to[out = 150, in = 30] (u4);
            \draw[->, thick] (u8) to[out = 150, in = 30] (u2);
            \draw[->, thick] (u9) to[out = 150, in = 30] (u4);
            \draw[->, thick] (u14) to[out = 150, in = 30] (u8);
            \draw[->, thick] (u8) to[out = 155, in = 25] (u5);
            \draw[->, thick] (u9) to[out = 155, in = 25] (u5);
            \draw[->, thick] (u10) to[out = 155, in = 25] (u5);
            \draw[->, thick] (u12) to[out = 155, in = 25] (u5);

            \draw[rounded corners, dotted]
            (-6.6, -4) rectangle (6.6, -2.3);

            \draw[rounded corners, dashed, blue]
             (-6.5, -3.8) rectangle (-3.875, -2.5);
             \draw[rounded corners, dashed, purple]
             (3.875, -3.8) rectangle (6.5, -2.5);
            
        \end{tikzpicture}
    \caption{
        A maximal non-terminal interval $I$ w.r.t.\ an order $\order$ and the non-terminal interval obtained from $I$ by applying Reduction Rule~\ref{REDUCTION: RICH REPLACE}.
        Black vertices denote the terminals, and white vertices denote the non-terminals.
        In the order $\order$, $I = [\vl, \vr]$ is a maximal non-terminal interval, black thick arcs present unaffected backward arcs, red arcs present affected arcs, and all forward arcs are omitted.
        In the interval $I$, assume that $[\vl, \vv{l + 2d}]$ is the set of out-rich vertices (in the blue box), $\rich = [\vv{l + 2d + 1}, \vv{r - 2d - 1}]$ is the set of rich vertices, and $\richin = [\vv{r - 2d}, \vr]$ is the set of in-rich vertices (in the purple box).
        In the order $\order$, $\vv{l + 2d + 1}$ is an affected rich vertex and $\vv{r - 2d}$ is an affected in-rich vertex.
        After applying Reduction Rule~\ref{REDUCTION: RICH REPLACE}, the unaffected rich vertices $\vv{l + 2d + 2}, \vv{l + 2d +3}, \ldots, \vv{r - 2d - 1}$ are removed and $\ell$ non-terminals $v_{1}, v_{2}, \ldots, v_{\ell}$ are added.
        }
        \label{FIG: REPLACE}
\end{figure}

\begin{reduction} \label{REDUCTION: RICH REPLACE}
    Let $d = (\appr + 2)k + 1$ and $\ell = 2d + k + 1$.
    If there is a maximal non-terminal interval $I = [\vl, \vr]$ with length at least $(7\appr + 13)k + 6$, then we do the following seven steps \textup{(}cf.~\ref{FIG: REPLACE} \textup{)}.
    \begin{enumerate}
        \item Remove all unaffected rich vertices in $I$;
        \item add $\ell$ new non-terminals $v_{1}, v_{2}, \ldots, v_{\ell}$;
        \item for each pair of vertices $v_{i}$ and $v_{j}$ with $1 \leq i < j \leq \ell$, add arc $v_{i}v_{j}$;
        \item for each out-rich vertex $u \in \richout$ with $x$ unaffected rich in-neighbors in $I$, add arcs $uv_{1}, uv_{2}, \ldots, uv_{\ell}$ and reserve the arcs $uv_{d + 1}, uv_{d + 2}, \ldots, uv_{d + x}$;
        \item for each in-rich vertex $w \in \richout$ with $y$ unaffected rich in-neighbors in $I$, add arcs $v_{1}w, v_{2}w, \ldots, v_{\ell}w$ and reverse the arcs $v_{d + 1}w, v_{d + 2}w, \ldots, v_{d + y}w$;
        \item for each affected rich vertex $v' \in \rich$, add arcs $v_{1}v', v_{2}v', \ldots, v_{d}v'$ and arcs $v_{d + 1}v', v_{d + 2}v', \ldots, v_{\ell}v'$; and
        \item for each vertices $u' \in \IL$ and $w' \in \IR$, add arcs $u' v_{i}$ and $v_{i} w'$ for all $1 \leq i \leq \ell$.
    \end{enumerate}
\end{reduction}

Clearly, the digraph after applying Reduction Rule~\ref{REDUCTION: RICH REPLACE} is still a tournament.

\begin{lemma} \label{LEM: LINEAR LENTH}
    Let $d = (\appr + 2)k + 1$.
    If Reduction Rule~\ref{REDUCTION: RICH REPLACE} cannot be applied, we have the upper bound of the length of each non-terminal interval:
    \begin{equation*}
       \maxlen = (7\appr + 13)k + 5.
    \end{equation*}
\end{lemma}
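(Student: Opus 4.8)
The plan is to read the bound off directly from the triggering condition of Reduction Rule~\ref{REDUCTION: RICH REPLACE}, via its contrapositive. The only precondition that activates that rule is the existence of a maximal non-terminal interval $I = [\vl, \vr]$ whose length is at least $(7\appr + 13)k + 6$; the rule carries no further guard. Consequently, the statement ``Reduction Rule~\ref{REDUCTION: RICH REPLACE} cannot be applied'' is logically equivalent to the statement that no maximal non-terminal interval has length at least $(7\appr + 13)k + 6$.

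The first step is to record this contrapositive explicitly: if the instance is such that Reduction Rule~\ref{REDUCTION: RICH REPLACE} is inapplicable, then every maximal non-terminal interval has length at most $(7\appr + 13)k + 5 = \maxlen$. This is an immediate reformulation and requires nothing beyond reading the threshold in the rule's statement.

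The one point that deserves a sentence of justification is the passage from maximal non-terminal intervals to arbitrary non-terminal intervals, since the lemma asserts the bound for \emph{every} non-terminal interval. Here I would invoke the definition of maximality: any non-terminal interval $I'$ either is already maximal or can be extended---by moving its left endpoint leftward and its right endpoint rightward until a terminal or an endpoint of the order is reached---to a maximal non-terminal interval $I''$ containing it. Since $I' \subseteq I''$, we have $|I'| \leq |I''| \leq \maxlen$, which yields the claimed uniform bound.

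I do not expect any genuine obstacle in this lemma: its content is an unwinding of the reduction rule's precondition together with the trivial monotonicity of length under the sub-interval relation. The only care required is to confirm that the rule has no applicability condition other than the length threshold, so that inapplicability is faithfully captured by the length bound; this is immediate from the phrasing of Reduction Rule~\ref{REDUCTION: RICH REPLACE}.
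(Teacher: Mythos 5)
Your proof is correct, but it takes a genuinely different route from the paper's. You read the bound off as the contrapositive of the rule's triggering condition: Reduction Rule~\ref{REDUCTION: RICH REPLACE} fires exactly when some maximal non-terminal interval has length at least $(7\appr+13)k+6$, so inapplicability means every maximal (hence every) non-terminal interval has length at most $(7\appr+13)k+5$. That is a valid and arguably the most direct proof of the literal statement, and your remark that an arbitrary non-terminal interval sits inside a maximal one is the right (small) point to make explicit. The paper instead \emph{computes} the size of the replacement interval $I' = (I\setminus U)\cup U'$ produced by one application of the rule: using Lemma~\ref{LEM: RICH} it bounds the out-rich and in-rich vertices by $2d$ each, bounds the affected vertices in $I$ by $\appr k$, and adds the $\ell = 2d+k+1$ new vertices, arriving at $|I'| \le 4d + \appr k + \ell = (7\appr+13)k+5$. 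What the paper's computation buys---and what your argument does not give---is the justification that the threshold $(7\appr+13)k+6$ in the rule is chosen coherently: after one application the new interval falls strictly below the threshold, so the rule makes progress and the iteration terminates (a fact the paper then uses when arguing the rule is applied at most $n$ times in Lemma~\ref{SAFE: RICH REPLACE}). Your contrapositive alone proves the stated lemma but would leave that progress claim unsupported; if you go your route, the interval-size computation still has to appear somewhere.
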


\begin{proof}
    Denote by $U$ the set of all unaffected vertices in $I$ and denote $U' = \Set{v_{1}, v_{2}, \ldots, v_{\ell}}$.
    Let $I' \coloneq (I \setminus U) \cup U'$ be the non-terminal set obtained by removing all unaffected rich vertices in $I$ and adding the $\ell$ new non-terminals according to Reduction Rule~\ref{REDUCTION: RICH REPLACE}.
    
    According to Lemma~\ref{LEM: RICH}, there are at most $2d$ in-rich and $2d$ out-rich vertices in $I$.
    Additionally, since each affected arc has at most one endpoint in $I$, there are at most $\appr k$ affected vertices in $I$.
    After applying Reduction Rule~\ref{REDUCTION: RICH REPLACE}, $\ell = 2d + k + 1$ vertices are added, and all unaffected rich vertices are removed.
    Thus, we derive that
    \begin{align*}
        |I'| 
        \leq{}& 2d + 2d + \appr k + \ell \\
        ={}& 2((\appr + 2)k + 1) + 2((\appr + 2)k + 1) + \appr k + (2d + k + 1) \\
        ={}& (7\appr + 13)k + 5.
    \end{align*}
    Finally, we conclude that if Reduction Rule~\ref{REDUCTION: RICH REPLACE} cannot be applied, we have the length of each non-terminal interval is bound by $(7\appr + 13)k + 5$.
    \qed
\end{proof}

\begin{lemma} \label{SAFE: RICH REPLACE}
    Reduction Rule~\ref{REDUCTION: RICH REPLACE} is safe and can be applied in polynomial time.
\end{lemma}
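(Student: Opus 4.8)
The plan is to establish both assertions of Lemma~\ref{SAFE: RICH REPLACE} separately, the running time being routine and the safeness being the real work. Since Reduction Rule~\ref{REDUCTION: RICH REPLACE} leaves $k$ unchanged and never touches the terminal set (it only deletes and creates non-terminals), by the ranking version of \SFAST{} (Proposition~\ref{PROP: RANK}) it suffices to show that the input $\ins$ is a YES-instance if and only if the output $\ins'$ is; I prove the two implications separately, each time starting from an optimal order of cost at most $k$ of one instance and producing an order of cost at most $k$ of the other. For the running time, richness and affectedness are read off the regular order $\order$ by counting neighbours inside the relevant maximal non-terminal interval, and the construction adds only $\ell=\bigO{\appr k}$ vertices and $\bigO{n\ell}$ arcs. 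Moreover, by Lemma~\ref{LEM: LINEAR LENTH} the new interval has length at most $(7\appr+13)k+5$, whereas the rule is triggered only when $|I|\ge(7\appr+13)k+6$; hence each application strictly decreases $|V(D)|$, so it fires at most $n$ times.

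Let $U$ be the set of unaffected rich vertices removed and $U'=\Set{v_{1},\ldots,v_{\ell}}$ the added non-terminals. Three invariants engineered by Steps~3--7 drive the proof. First, every $v_{i}$ receives all of $\IL$ and dominates all of $\IR$ (Step~7), exactly as every $u\in U$ does by Lemma~\ref{LEM: MAX NONTERMINAL INTERVAL}. Second, by Steps~4 and~5 each out-rich (resp.\ in-rich) vertex $z$ is reconnected to $U'$ so that the number of arcs between $z$ and the block equals the number between $z$ and $U$; thus the block-interface of $z$ is unchanged. Third, $U'$ is a transitive tournament (Step~3), so it behaves as a single rigid ordered block. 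The consequence used repeatedly is that whenever the block and the rich or affected-rich vertices lie inside one maximal non-terminal interval, every arc internal to that interval is cost-free, which is precisely why the arcs added in Step~6 are immaterial.

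For the forward implication, let $\optorder$ be optimal for $D$ with $\cost{\optorder}\le k$. Because $|I|\ge(7\appr+13)k+6$ we have $|\rich|\ge k+1$ and $|I|\ge 2d+1$, so Lemma~\ref{LEM: OPT} supplies a maximal non-terminal interval $\optI$ of $\optorder$ with $\rich\subseteq\optI$, $\richout\cap\optIR=\varnothing$, $\richin\cap\optIL=\varnothing$, $\IL\cap\optIR=\varnothing=\IR\cap\optIL$, and with the terminals of $\IL$ (resp.\ $\IR$) lying in $\optIL$ (resp.\ $\optIR$); in particular $U\subseteq\optI$. Delete $U$ and insert $U'$ consecutively, in index order, into the gap, obtaining $\optorder'$; the inserted block lies inside a maximal non-terminal interval $\optI'$ bounded by the same terminals. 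Since all surviving vertices keep their relative order and all terminals keep their positions, every arc avoiding the block keeps its affected status. The only affected arcs meeting the block come from out-rich vertices placed in $\optIL$ and in-rich vertices placed in $\optIR$, as all other neighbours yield forward arcs by invariant~(i) or lie in $\optI$ and give cost-free intra-interval arcs; by invariant~(ii) each such vertex contributes the same number of affected arcs to $U'$ as to $U$. Hence $\cost{\optorder'}=\cost{\optorder}\le k$, so $\ins'$ is a YES-instance.

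The backward implication is the main obstacle. Here $U'$ plays in $D'$ the role $\rich$ played in $D$: for $d+1\le i\le d+k+1$ the vertex $v_{i}$ already has $i-1\ge d$ in-neighbours and $\ell-i\ge d$ out-neighbours inside $U'$ alone, which is the precise reason for the calibration $\ell=2d+k+1$, so these $k+1$ vertices are rich with respect to $D'$. As richness depends only on the neighbour set inside a fixed maximal non-terminal interval, it is invariant under regularising the natural substituted order (whose cost is still at most $\appr k$, since the block contributes no affected arc) by Lemma~\ref{LEM: REGULAR}; thus Lemma~\ref{LEM: OPT} applies to $\ins'$ and confines these vertices to one maximal non-terminal interval of any optimal order $\optorder'$. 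A short degree computation shows that each original out-rich (resp.\ in-rich) vertex remains out-rich (resp.\ in-rich) in $D'$, so its interface to the block is unchanged; collapsing the block back to $U$ inside that interval and repeating the count yields $\cost{\optorder}\le\cost{\optorder'}\le k$, so $\ins$ is a YES-instance. The technical heart, and where I expect the difficulty to concentrate, is reconciling the two notions of richness across the boundary: an optimal order of $D'$ may push up to $k$ of the $v_{i}$, and up to $\appr k$ affected-rich vertices (whose type can change), onto the wrong side of a terminal, and one must verify that the $d$-vertex buffers of un-reversed arcs flanking the reversed block $v_{d+1},\ldots,v_{d+x}$ — available exactly because $d=(\appr+2)k+1>k$ — absorb these misplacements so that the affected-arc count is preserved. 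Once this robust matching is in place, the two implications together establish safeness.
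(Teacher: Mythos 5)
Your overall strategy coincides with the paper's: both directions hinge on Lemma~\ref{LEM: OPT}, a block substitution between the deleted unaffected rich vertices $U$ and the inserted vertices $U'$, and a comparison of the affected arcs incident to the two blocks. Your forward direction is in fact carried out more carefully than the paper's (which argues with the regular order $\order$ of cost at most $\appr k$ rather than with an optimal order of $D$), and your identification of $v_{d+1},\dots,v_{d+k+1}$ as the $k+1$ rich vertices of $I'$ that allow Lemma~\ref{LEM: OPT} to fire on the reduced instance is exactly the paper's argument, including the role of the calibration $\ell=2d+k+1$.

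The gap is in the backward direction, and you flag it yourself: after invoking Lemma~\ref{LEM: OPT} for an optimal order of $D'$ you assert that ``repeating the count'' gives the bound, but then state that ``one must verify that the $d$-vertex buffers \dots absorb these misplacements so that the affected-arc count is preserved'' and that safeness follows ``once this robust matching is in place.'' That verification \emph{is} the content of the safeness claim, and it is not supplied. The paper closes it by an explicit five-case analysis of every arc with exactly one endpoint in $U$, according to whether the other endpoint lies in $\IL$, $\richout$, $\rich\setminus U$, $\richin$, or $\IR$: the $\IL$/$\IR$ cases are forward arcs by Lemma~\ref{LEM: MAX NONTERMINAL INTERVAL} combined with $\IL\cap\optIR=\IR\cap\optIL=\varnothing$; the $\rich\setminus U$ case is cost-free since $\rich\setminus U\subseteq\optI$; and the $\richout$/$\richin$ cases use the identities $|N^{-}_{U'}(z)|=|N^{-}_{U}(z)|$ and $|N^{+}_{U'}(w)|=|N^{+}_{U}(w)|$ engineered by Steps~4 and~5. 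Note that only the in-degree (resp.\ out-degree) toward the block is preserved, not ``the number of arcs between $z$ and the block'' as your invariant~(ii) states, since $|U'|\neq|U|$; the count works because only that one degree is charged. Your stated worry about rich $v_i$'s landing on the wrong side of a terminal is already excluded by Lemma~\ref{LEM: OPT}(\ref{ITEM: RICH}); the genuinely delicate residual point is the non-rich new vertices $v_1,\dots,v_d$ and $v_{d+k+2},\dots,v_\ell$, for which Lemma~\ref{LEM: OPT}(\ref{ITEM: RICH IN+OUT}) forbids only one side each, and your proposal does not say how the accounting survives if some of them leave $\optI$. Separately, your termination argument (``each application strictly decreases $|V(D)|$'') is not justified: the available bounds give only $|U|\geq(2\appr+5)k+2$ while $\ell=(2\appr+5)k+3$, so a decrease is not guaranteed; the paper instead bounds the number of applications by observing that each one permanently shrinks its interval below the trigger threshold of Reduction Rule~\ref{REDUCTION: RICH REPLACE}.
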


\begin{proof}
    Denote by $U$ the set of all unaffected vertices in $I$ and denote $U' = \Set{v_{1}, v_{2}, \ldots, v_{\ell}}$.
    Let $D'$ be the tournament obtained from $D$ by applying Reduction Rule~\ref{REDUCTION: RICH REPLACE} and $n' = |V(D')|$.
    We show $\ins$ is a YES-instance if and only if $\ins'$ is a YES-instance.

    We first show that $\ins'$ is a YES-instance if $\ins$ is a YES-instance.
    
    Recall that $I = [\vl, \vr]$ is the maximal non-terminal interval w.r.t.\ $\order$.
    Define
    \begin{equation*}
        I' \coloneq (I \setminus U) \cup U'
    \end{equation*}
    as the non-terminal set obtained by removing all unaffected rich vertices in $I$ and adding the $\ell$ new non-terminals according to Reduction Rule~\ref{REDUCTION: RICH REPLACE}.
    Fix an order $\tau' = (\tau'_{1}, \tau'_{2}, \ldots, \tau'_{|I'|})$ of vertices in $I'$, and define
    \begin{equation*}
        \order' \coloneq (\vv{1}, \vv{2}, \ldots, \vv{l - 1}, \tau'_{1}, \tau'_{2}, \ldots, \tau'_{|I'|}, \vv{r + 1}, \vv{r + 2}, \ldots, \vv{n})
    \end{equation*}
    as an order of vertices in $V(D')$.
    Clearly, $I'$ is the maximal non-terminal interval w.r.t. $\order'$ since $\vv{l - 1}$ and $\vv{r + 1}$ are terminals.

    We now prove $\cost{\order} \leq \cost{\order'}$.
    Because of Step~7, for any vertex $v_{i} \in U'$, we have $\IL \subseteq N^{-}(v_{i})$ and $\IR \subseteq N^{+}(v_{i})$.
    This implies that all newly added vertices are unaffected w.r.t.\ $\order'$.
    Consequently, we conclude that $\cost{\order} \leq \cost{\order'}$, implying that $\ins'$ is a YES-instance if $\ins$ is a YES-instance.

    Next, we show $\ins$ is a YES-instance if $\ins'$ is a YES-instance.
    
    Suppose $\optorder$ is an optimal order of vertices in $V(D')$.
    Because of Step~3, every vertex $v_{i}$ with $d + 1 \leq i \leq d + k + 1$ has at least $d$ in-neighbors and $d$ out-neighbors in $I'$, and thus is rich in $I'$.
    It follows that $|W_{I'}| \geq k + 1$ holds.
    Additionally, since there are $\ell = 2d + k + 1$ non-terminals added, we know $|I'| \geq \ell > 2d + 1$.
    Hence, the conditions in Lemma~\ref{LEM: OPT} are satisfied.
    As a result, there exists a maximal non-terminal interval $\optI$ w.r.t.\ $\optorder$ with $\optI$-partition $(\optIL, \optI, \optIR)$ satisfying properties (\ref{ITEM: TERMINAL})-(\ref{ITEM: NONTERMINAL}).
    Let $(\IL', I', \IR')$ be the $I'$-partition.
    Based on Lemma~\ref{LEM: OPT}(\ref{ITEM: TERMINAL}) and~(\ref{ITEM: NONTERMINAL}), we have
    \begin{equation*}
        \IL = \IL' \subseteq \optIL,
        \quad\quad\quad
        \IR = \IR' \subseteq \optIR.
    \end{equation*}
    Because of Step~4, for any out-rich vertex $u \in \richout$, we have $|N^{-}_{I'}(u)| = |N^{-}_{I}(u)|$, leading that $u$ is also out-rich in $I'$.
    Similarly, any in-rich vertex $w \in \richin$ is also in-rich in $I'$.
    By Lemma~\ref{LEM: OPT}(\ref{ITEM: RICH IN+OUT}), it follows that
    \begin{equation*}
        \richout \subseteq W^{+}_{I'} \subseteq \optIL \cup \optI,
        \quad\quad\quad
        \richin \subseteq W^{-}_{I'} \subseteq \optIR \cup \optI.
    \end{equation*}
    Then, any vertex $v_{i} \in U'$ with $d + 1 \leq i \leq d + k + 1$ is rich.
    Besides, for any affected rich vertex $v' \in \rich \setminus U$, $v_{1}, v_{2}, \ldots, v_{d}$ are in-neighbors of $v'$ in $I'$ and $v_{d + 1}, v_{d + 2}, \ldots, v_{\ell}$ are out-neighbors of $v'$ in $I'$, because of Step~6.
    This indicates that $v'$ is also rich in $I'$.
    By Lemma~\ref{LEM: OPT}(\ref{ITEM: RICH}), we have
    \begin{equation*}
        v_{d + 1}, v_{d + 2}, \ldots, v_{d + k + 1} \in W_{I'} \subseteq \optI,
        \quad\quad\quad
        \rich \setminus U \subseteq W_{I'} \subseteq \optI.
    \end{equation*}
    Furthermore, we define
    \begin{equation*}
        I^{\dag} \coloneq (\optI \cup U) \setminus U'
    \end{equation*}
    as a set of non-terminals in $V(D)$.
    Fix an order $\tau = (\tau_{1}, \tau_{2}, \ldots, \tau_{|I^{\dag}|})$ of vertices in $I^{\dag}$, and define
    \begin{equation*}
        \order^{\dag} \coloneq (\optorder_{1}, \optorder_{2}, \ldots, \optorder_{l - 1}, \tau_{l}, \tau_{2}, \ldots, \tau_{|I^{\dag}|}, \optorder_{r + 1}, \optorder_{r + 2}, \ldots, \optorder_{n'}).
    \end{equation*}
    as an order $\order^{\dag}$ of vertices in $V(D)$.
    Clearly, $I^{\dag}$ is the maximal non-terminal interval w.r.t. $\order^{\dag}$ since $\optorder_{l - 1}$ and $\optorder_{r + 1}$ are terminals.
    Let $(\IL^{\dag}, I^{\dag}, \IR^{\dag})$ be the $I^{\dag}$-partition.
    By the construction of $\order^{\dag}$, we have $\IL^{\dag} = \optIL$ and $\IR^{\dag} = \optIR$.

    We now prove that $\cost{\order^{\dag}} \leq \cost{\optorder}$.
    Notice that in the order $\order^{\dag}$, all vertices in $U$ belong to the same non-terminal interval.
    This means that there are no affected arcs w.r.t.\ $\order^{\dag}$ that have both endpoints in $U$.
    Thus, we only need to consider affected arcs w.r.t.\ $\order^{\dag}$ that have one endpoint in $U$ and the other endpoint $v \in V(D) \setminus U$.
    We analyze these arcs by considering five cases according to the position of $v$ in $\order$.

    \textbf{Case 1:} $v \in \IR$.
    In this case, for any unaffected rich vertex $u \in U$, $vu$ is an arc by Lemma~\ref{LEM: MAX NONTERMINAL INTERVAL}.
    Since $\IL \subseteq \optIL = \IL^{\dag}$, $vu$ is a forward arc w.r.t.\ $\order^{\dag}$.

    \textbf{Case 2:} $v \in \richout$.
    In this case, we have $v \in \optIL \cup \optI = \IL^{\dag} \cup I^{\dag}$.
    Since $|N^{-}_{U'}(u)| = |N^{-}_{U}(u)|$, the number of affected arcs (w.r.t.\ $\optorder$) connecting $v$ and the vertices in $U'$ equals the number of affected arcs (w.r.t.\ $\order^{\dag}$) connecting $v$ and the vertices in $U$.

    \textbf{Case 3:} $v \in \rich \setminus U$.
    In this case, we have $v \in W_{I'} \subseteq \optI \subseteq I^{\dag}$.
    Thus, the arc connecting $v$ and the vertices in $U$ must be unaffected.
    
    \textbf{Case 4:} $v \in \richin$.
    In this case, we have $v \in \optIR \cup \optI = \IR^{\dag} \cup I^{\dag}$.
    Since $|N^{-}_{U'}(u)| = |N^{-}_{U}(u)|$, the number of affected arcs (w.r.t.\ $\optorder$) connecting $v$ and the vertices in $U'$ equals the number affected arcs (w.r.t.\ $\order^{\dag}$) connecting $v$ and the vertices in $U$.

    \textbf{Case 5:} $v \in \IL$.
    In this case, for any unaffected rich vertex $u \in U$, $uv$ is an arc by Lemma~\ref{LEM: MAX NONTERMINAL INTERVAL}.
    Since $\IR \subseteq \optIR = \IL^{\dag}$, $uv$ is a forward arc w.r.t.\ $\order^{\dag}$.

    Overall, the number of affected arcs that disappear after removing $U'$ is no less than the number affected arcs that appear after adding $U$.
    Therefore, we conclude that $\cost{\order^{\dag}} \leq \cost{\optorder}$, implying that $\ins$ is a YES-instance if $\ins'$ is a YES-instance.
    
    Finally, Reduction Rule~\ref{REDUCTION: RICH REPLACE} can be executed in polynomial time, as each of its steps can be executed in time $\bigO{n^{2}} + \bigO{n'^{2}} = \bigO{(n + k)^{2}}$.
    In addition, by Lemma~\ref{LEM: LINEAR LENTH}, after applying Reduction Rule~\ref{REDUCTION: RICH REPLACE}, the length of the maximal interval is less than $(7\appr + 13)k + 1$, meaning that Reduction Rule~\ref{REDUCTION: RICH REPLACE} can be applied at most $n$ times.
    Consequently, Reduction Rule~\ref{REDUCTION: RICH REPLACE} can be applied in polynomial time.
    \qed
\end{proof}

Based on Lemma~\ref{LEM: LINEAR LENTH}, after applying Reduction Rule~\ref{REDUCTION: RICH REPLACE}, the length of each maximal non-terminal is bounded by $\maxlen = (7\appr + 13)k + 5$.
Based on Lemma~\ref{LEM: NONTERMINA INTERVAL LENGTH}, we can derive that the reduced instance with a regular order of vertices is a NO-instance if the number of vertices
\begin{align*}
    n 
    >{}& (2 \appr k + 1) \maxlen + 4(\appr + 1)\appr k^{2} + 4\appr k \\
    ={}& (2 \appr k + 1) ((7\appr + 13)k + 5) + 4(\appr + 1)\appr k^{2} + 4\appr k \\
    ={}& 6\appr (3 \appr + 5)k^{2} + (21\appr + 13) k + 5.
\end{align*}

\begin{reduction} \label{REDUCTION: QUADRATIC}
    Let $\ins$ be a reduced instance to which Reduction Rule~\ref{REDUCTION: RICH REPLACE} cannot be applied.
    If the number of vertices in $\ins$ satisfies that
    \begin{equation*}
        n \geq 6\appr (3 \appr + 5)k^{2} + (21\appr + 13) k + 6,
    \end{equation*}
    then return a trivial NO-instance.
\end{reduction}

Directly following Reduction Rule~\ref{REDUCTION: QUADRATIC}, we obtain a kernel for \SFAST{}.
The size of the kernel depends only on the solution size $k$ and the approximation ratio $\appr$.

\begin{theorem}
    If there is an $\appr$-approximation algorithm for \SFAST{}, then \SFAST{} admits a kernel of vertex size $\bigO{(\appr k)^{2}}$.   
\end{theorem}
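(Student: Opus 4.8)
The plan is to assemble the kernelization algorithm from the ingredients of this section and then verify that it outputs an equivalent instance on $\bigO{(\appr k)^{2}}$ vertices in polynomial time. First I would run the given $\appr$-approximation algorithm for \SFAST{} on the input $\instance$. If the returned solution has size more than $\appr k$, then since the algorithm always returns a set of size at most $\appr$ times the optimum, the minimum $T$-feedback arc set must have size more than $k$; hence by Proposition~\ref{PROP: RANK} no order of cost at most $k$ exists, and we may safely output a trivial NO-instance. Otherwise the returned solution yields, again via Proposition~\ref{PROP: RANK}, an order $\order$ with $\cost{\order} \leq \appr k$, which Lemma~\ref{LEM: REGULAR} converts into a regular order $\regular$ of the same cost in polynomial time.

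Next I would apply the reduction rules exhaustively, recomputing a regularization after every modification. Reduction Rules~\ref{REDUCTION: NO}--\ref{REDUCTION: K CYCLES} are safe and polynomial-time by the lemmas of Subsection~\ref{SUBSEC: BASIC REDUCTION}; since each of them either deletes a vertex or reverses an arc without increasing $k$, and since re-regularization (Lemma~\ref{LEM: REGULAR}) preserves both the cost and the positions of the terminals, the maintained regular order continues to satisfy $\cost{\regular} \leq \appr k$. Once the instance is reduced, I would apply Reduction Rule~\ref{REDUCTION: RICH REPLACE}, which is safe and polynomial-time by Lemma~\ref{SAFE: RICH REPLACE}, to every maximal non-terminal interval of length at least $(7\appr + 13)k + 6$. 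By Lemma~\ref{LEM: LINEAR LENTH}, once this rule can no longer be applied, every non-terminal interval has length at most $\maxlen = (7\appr + 13)k + 5$.

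With this interval-length bound in hand, I would invoke Lemma~\ref{LEM: NONTERMINA INTERVAL LENGTH}: substituting $\maxlen = (7\appr+13)k+5$ into its threshold $(2\appr k + 1)\maxlen + 4(\appr + 1)\appr k^{2} + 4\appr k$ yields $6\appr(3\appr+5)k^{2} + (21\appr+13)k + 5$, so any reduced instance whose regular order lives on strictly more vertices than this must be a NO-instance. Reduction Rule~\ref{REDUCTION: QUADRATIC} therefore safely returns a trivial NO-instance whenever $n \geq 6\appr(3\appr+5)k^{2} + (21\appr+13)k + 6$; and if it does not fire, the surviving instance has $n = \bigO{(\appr k)^{2}}$ vertices, which is exactly the claimed kernel size.

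Finally I would confirm the overall running time: every rule runs in polynomial time, Reduction Rules~\ref{REDUCTION: NONCYCLE} and~\ref{REDUCTION: RICH REPLACE} fire at most $n$ times each, Reduction Rule~\ref{REDUCTION: K CYCLES} at most $k$ times, and a regularization is recomputed at most once per application in time $\bigO{n^{4}}$, so the whole procedure is polynomial. The step I expect to need the most care is arguing that interleaving Reduction Rule~\ref{REDUCTION: RICH REPLACE} with re-regularization really terminates without blowing up the instance: although the rule inserts $\ell = 2d + k + 1$ fresh non-terminals, Lemma~\ref{LEM: LINEAR LENTH} shows each application leaves the processed interval of length at most $\maxlen$, and because regularization never changes terminal positions it never lengthens any maximal non-terminal interval. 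I would make this precise by checking that the newly inserted vertices are unaffected (as in the safeness argument of Lemma~\ref{SAFE: RICH REPLACE}), so the invariant $\cost{\regular} \leq \appr k$ survives each step, and that the number of over-length intervals strictly decreases, guaranteeing that the loop halts.
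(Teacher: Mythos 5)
Your proposal is correct and takes essentially the same route as the paper: the theorem there is stated as a direct consequence of assembling the $\appr$-approximation, regularization, the basic reduction rules, Reduction Rule~\ref{REDUCTION: RICH REPLACE}, Lemma~\ref{LEM: NONTERMINA INTERVAL LENGTH}, and Reduction Rule~\ref{REDUCTION: QUADRATIC}, which is exactly the pipeline you describe. Your extra attention to the termination of the interleaved re-regularization is a reasonable refinement of a point the paper leaves implicit, but it does not change the argument.
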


It is well-known that \SFAS{} admits an $\bigO{\log k \log \log k}$-approximation algorithm~\cite{combinatoricaSeymour95,algorithmicaEvenNSS98}, which can also be applied to \SFAST{} naturally.
Using this, we obtain an almost quadratic kernel for \SFAST{}.

\begin{corllary}
    \SFAST{} admits a kernel of vertex size $\bigO{(k \log k \log \log k)^{2}} = \tildeO{k^{2}}$.   
\end{corllary}

\section{Concluding Remarks} \label{SEC: CONCLU}

Nearly two decades ago, the \FAST{} problem was shown to admit a linear kernel.
However, \SFAST{} seems more challenging.
In contrast to the \textsc{Subset Feedback Vertex Set in Tournaments} problem, which can be viewed as the implicit \textsc{$3$-Hitting Set} problem and thus admits a quadratic kernel~\cite{tcsBaiX23}, \SFAST{} does not share this property.
Prior to this work, it remained unknown whether \SFAST{} admits a polynomial kernel.

In this paper, by introducing the novel concepts of regular orders and rich vertices, we establish a polynomial kernel for \SFAST{}.
It remains an open question whether this bound can be further improved.
An approximation algorithm with a constant approximation ratio would lead to a quadratic kernel.



\bibliographystyle{splncs04}
\bibliography{SFAST}

\end{document}